\newcommand{\RomNum}[1]{%
  \textup{\uppercase\expandafter{\romannumeral#1}}%
}
\newcommand*{\bigs}[1]{\scalebox{1.25}{\ensuremath#1}}
\definecolor{darkblue}{rgb}{0, 0, 0.5}
\newtheorem{theorem}{Theorem}
\newtheorem{lemma}{Lemma}
\newtheorem{definition}{Definition}
\newtheorem{example}{Example}
\newtheorem*{remark}{Remark}
\begin{document}

\title{Strength in Numbers: Robust Mechanisms for Public Goods with Many Agents\thanks{
Alphabetical ordering of authors: the two authors contributed equally to this work. We appreciate the Editor Clemens Puppe, an Associate Editor, and two referees whose valuable comments and suggestions have significantly improved this paper. We are grateful to Joel Sobel for his constant support and helpful discussions on this paper. We also thank Songzi Du, Martin Hellwig, and Andrew Postlewaite.}}

\author{Jin Xi\thanks{Department of Economics, University of California, San Diego. Address: 9500 Gilman Dr. La Jolla, CA 92093. Email: \url{x5jin@ucsd.edu}.} \quad \quad Haitian Xie\thanks{Guanghua School of Management, Peking University. Address: 5 Yiheyuan Road, Haidian District, Beijing, China, 100871. Email: \url{xht@gsm.pku.edu.cn}.}}

\date{\textbf{\today}}

\maketitle
\thispagestyle{empty}
\vspace{-2em}

\begin{abstract}
    This study examines the mechanism design problem for public goods provision in a large economy with $n$ independent agents. We propose a class of dominant-strategy incentive compatible and ex-post individually rational mechanisms, which we call the adjusted mean-thresholding (AMT) mechanisms. We show that when the cost of provision grows slower than the $\sqrt{n}$-rate, the AMT mechanisms are both eventually ex-ante budget balanced and asymptotically efficient. When the cost grows faster than the $\sqrt{n}$-rate, in contrast, we show that any incentive compatible, individually rational, and eventually ex-ante budget balanced mechanism must have provision probability converging to zero and hence cannot be asymptotically efficient. The AMT mechanisms have a simple form and are more informationally robust when compared to, for example, the second-best mechanism. This is because the construction of an AMT mechanism depends only on the first moment of the valuation distribution. 

    \vspace{1em}
    \noindent%
{\bf Keywords:} Asymptotic Efficiency, Berry-Esseen Theorem, Budget Balance, Informational Robustness, Large Economies, Moment Restrictions.

\vspace{1em}
\noindent JEL Classification: D82
\end{abstract}

\section{Introduction}

This study examines the mechanism design problem for public goods in a model of independent private valuations. We take an asymptotic approach to study the \citet{myerson1983efficient}'s impossibility theorem in large economies.\footnote{\citet{myerson1983efficient} establish the impossibility result in the bilateral trade setting, but it extends directly to the public goods problem. We provide a more substantial overview of this problem below.} As a main result, we propose a class of \emph{dominant-strategy incentive compatible}, and \emph{ex-post individually rational} mechanisms that are \emph{eventually ex-ante budget balanced} and \emph{asymptotically efficient} as the population grows. Such mechanisms enjoy simplicity and informational robustness in that the mechanism designer only needs to know the first moment of the valuation distribution. The robustness feature is useful from the designer's perspective and is in line with \citet{wilson1987game}'s doctrine that a mechanism is recommended if it works under a wide range of valuation distributions.

The difficulty of the public goods problem is represented in the tension between budget balance and efficiency. The social planner wants to serve everyone the public good but cannot raise enough money to afford so. This tension is due to the social planner's limited ability in collecting payments for the public good, for which there is a clear economic intuition: there is no way to decline one agent while serving another. It is thus difficult to incentivize the agents to pay for the public good since a single agent's report is unlikely to influence the collective decision. 

In an asymptotic framework where the number of agents goes to infinity, the key to the tension between budget and welfare is the pivotal probability --- the probability that an agent's report is pivotal to the collective decision on the public good. From the efficiency perspective, the pivotal probability needs to decrease fast as the number of agents increases since the social planner wants to provide the public good decisively. From the budget perspective, there needs to be a significant fraction of pivotal agents because they are the ones paying for the public good. The question is whether there exists a rate of decrease for the pivotal probability such that budget balance and efficiency can both be achieved asymptotically as the number of agents increases to infinity.

The first contribution of our paper is to explicitly find such a rate of the pivotal probability and construct simple mechanisms that resolves the tension between budget and welfare.
The mechanism we study is called the \emph{adjusted mean-thresholding} (AMT) mechanism. Here we describe its simplest form. Let there be $n$ agents, and each has a private valuation $V_i \geq 0$ of the public good. The valuations are independent and identically distributed (iid) with mean $\mu$. Consider the following decision rule for providing the public good:\footnote{For any measurable set $E$, we use $\mathbf{1}E$ to denote the indicator function of $E$.}
\begin{equation} \label{eqn:simplest-AMT}
	\mathbf{1}\bigs\{\sum V_i \geq n \mu + \alpha_n \bigs\}.
\end{equation}
Under this decision rule, the public good is provided whenever the sum of valuations exceeds the total expected surplus $n \mu$ adjusted by the term $\alpha_n$. The transfer payments are given by the usual revenue equivalence result such that the incentive constraints are satisfied. The adjustment term $\alpha_n$ is a sequence of constants that goes to negative infinity. It characterizes the rate of decrease for the pivotal probability, which in turn determines the trade-off between budget balance and efficiency. 

We find the following theoretical results. If the cost of providing the public good grows slower than $\sqrt{n}$ (in a way defined more precisely later), then by setting the rate of $ \lvert \alpha_n \rvert$ to be between $\sqrt{n}$ and $\sqrt{n \log n}$, we can make the mechanism described by (\ref{eqn:simplest-AMT}) enjoy two desirable properties: (1) the ex-ante budget is always balanced for $n$ large enough (i.e., eventually ex-ante budget balanced), and (2) the ratio between the achieved welfare and the optimal welfare converges to one (i.e., asymptotically efficient). Moreover, we show that the assumption on the asymptotic growth rate of the cost is inevitable for budget balance because the maximum collectible revenue grows at the $\sqrt{n}$-rate. 

Besides its simplicity, the mechanism described by (\ref{eqn:simplest-AMT}) is informationally robust. The only information required about the distribution of $V_i$ is its first moment $\mu$. In particular, this AMT mechanism is more informationally robust than the \emph{second-best} mechanism studied by \cite{guth1986private}, whose construction relies on the entire distribution function.\footnote{The second-best mechanism maximizes welfare among the class of mechanisms that are incentive compatible, individually rational, and budget balance. We discuss more about the second-best mechanism in Section \ref{literature} and Section \ref{sec:setup}.} The qualitative difference between the mean and the entire valuation distribution can be formalized by using the information theory. Specifically, given an amount of data, the estimation error of the mean is smaller by orders of magnitude than that of the entire distribution.


The second contribution of our paper is in reconciling the positive and negative results in the public good literature regarding the growth rate of the cost. \cite{mailath1990asymmetric} show that it is impossible to resolve the tension between budget and welfare if the cost grows proportionally with $n$. On the other hand, \cite{hellwig2003public} shows that the tension can be resolved if the cost remains fixed (does not grow with $n$). Our results demonstrate that it is the $\sqrt{n}$ growth rate of the cost that draws the line between negative and positive results. This is important in practice when the marginal cost of providing the public good is decreasing in the number of participants. In particular, our results implies that the public good can be efficiently provided under the budget constraint if and only if the marginal cost decreases faster than $1/\sqrt{n}$.

The third contribution of our paper is that we provide a novel approach to calculating the asymptotic budget. By the revenue equivalence result, the expected total payments collected from the mechanism (\ref{eqn:simplest-AMT}) can be represented as a truncated expectation of the sum of virtual valuations:
\begin{align} \label{eqn:simplest-AMT-revenue}
    \mathbb{E} \left[ \sum \psi(V_i) \mathbf{1}\bigs\{\sum (V_i - \mu) \geq \alpha_n \bigs\} \right],
\end{align}
where the virtual valuation is defined as $\psi(V_i) \equiv V_i - (1-F(V_i))/f(V_i)$ with $F$ and $f$ being the cumulative distribution function and the density function, respectively. When $n$ is large, $\sum (V_i - \mu)$ and $\sum \psi(V_i)$ are approximately joint normal by the \emph{central limit theorem}. We can therefore approximate the expectation in (\ref{eqn:simplest-AMT-revenue}) with a normal distribution. When $n$ is finite, however, the joint distribution of $\left(\sum (V_i - \mu),\sum \psi(V_i) \right)$ can deviate from normal distribution and lead to a sizable approximation error that grows with $n$. To bound this approximation error, we use the \emph{multivariate Berry-Esseen theorem}, which specifies the convergence rate of the multivariate central limit theorem.\footnote{The central limit theorem states that the distribution of sum of iid random variables is approximately a normal distribution. The Berry-Esseen theorem further specifies how close the normal approximation is.} More specifically, Berry-Esseen theorem allows us to find the rate of the adjustment term $\alpha_n$ such that the budget of (\ref{eqn:simplest-AMT}) is ensured to balance eventually, even with the presence of approximation error. Such an explicit result on the rate of $\alpha_n$ would be impossible if we only apply the central limit theorem without more accurate convergence results such as Berry-Esseen theorem.\footnote{For example, it would be difficult to derive such a result based on the proof method in \cite{hellwig2003public}. We provide a discussion on this issue in Appendix \ref{sec:comparison-hellwig}.}

The remaining part of this section discusses the literature. Section \ref{sec:setup} introduces the public goods provision problem and the asymptotic approach we take. Section \ref{sec:welfare} proposes the AMT mechanisms and examines the asymptotic behavior of their budget and welfare. Section \ref{sec:conlusion} concludes. Proofs of the results in the main text are listed in Appendix \ref{sec:proofs}.

In a longer version of the paper, we provide several extension results.\footnote{This version can be found at: https://arxiv.org/abs/2101.02423v3.}  These extensions include independent but not necessarily identical valuations, ex-post budget and welfare, non-binary decision environment, non-linear utility functions, and asymptotic profit. Due to space limits, these extensions are not presented in the current paper.


\subsection{Literature Review}\label{literature}

Our study addresses a long-existing challenge in the literature on the mechanism design problem for public goods. The aforementioned tension between budget balance and efficiency is in fact a quadrilemma, as the incentive constraints also play a role in the problem. More concretely, \cite{myerson1983efficient} show that efficiency cannot be achieved simultaneously with the three criteria: budget balance, incentive compatibility, and individual rationality. For example, the VCG mechanism \citep{vickrey1961counterspeculation,clarke1971multipart,groves1973incentives} implements the efficient outcome but always leads to a budget deficit. In fact, our results imply that even the pivot mechanism (originally due to \cite{green1977revelation}, also known as the Clarke mechanism), which is the VCG mechanism with the least budget deficit, can incur a growing budget deficit as the population grows. As another example, the second-best mechanism introduced by \citet{guth1986private} maximizes the welfare under the three criteria but leads to inefficiency.


The next question is whether this tension can be alleviated in the large economy setting, where the number of agents increases to infinity.\footnote{There is another notion of ``large economy'' in the literature that considers a continuum of agents and each agent is negligible compared to the population \citep{bierbrauer2009optimal,bierbrauer2014optimal,bierbrauer2015public,bierbrauer2016robustly}.} As mentioned earlier, the answer depends on the growth rate of cost. \citet{mailath1990asymmetric} let the cost grow proportionally with the population size, and show that for any mechanism satisfying the above three criteria, the probability of public-good provision goes to zero as the number of agents increases. On the other hand, \citet{hellwig2003public} allows the cost to remain fixed and shows that the second-best mechanism is asymptotically efficient. Our results fill the gap between these two papers. We show that the second-best provision probability converges to zero if the cost grows faster than the $\sqrt{n}$-rate. On the other hand, if the cost grows slower than the $\sqrt{n}$-rate, the AMT mechanisms we propose have a provision probability converging to one.\footnote{\cite{hellwig2003public} also identifies the $\sqrt{n}$-rate in the context of Bayesian incentive compatibility. Our results are focused on the dominant-strategy incentive compatibility. One take-away here is that the weaker notion of Bayesian incentive compatibility does not relax this bound. This notion of equivalence between Bayesian incentive compatibility and dominant-strategy incentive compatibility represented by the requirement on the growth rate of cost can be seen as a complement to the theoretical results in \cite{gershkov2013equivalence}.}


Besides the asymptotic approach we take, another case where efficiency can be achieved simultaneously with budget balance, incentive compatibility, and individual rationality is when the valuations are \emph{correlated}, and their joint distribution is \emph{known} to the mechanism designer, as demonstrated by \citet{kosenok2008individually}. Their mechanisms extract the social surplus by taking advantage of the correlation structure among agents' valuations. This result can be seen as an extension of the finding of \citet{cremer1988full} regarding the public goods scenario. Our paper complements their results by examining the case of independent valuations with an unknown distribution.

Our proposed class of mechanisms is novel in that it significantly reduces the information requirements on the valuation distribution as compared with the aforementioned studies. The second-best mechanism requires the full information of valuation distributions. The mechanism by \citet{kosenok2008individually} requires the exact details of the joint distribution of all agents' valuations. The AMT mechanism we propose, in contrast, is appealing in its simplicity and robustness as it requires only one moment from the valuation distribution.\footnote{This moment can come from any increasing and continuous transformation of the valuation. See Section \ref{sec:welfare}.} A moment of a distribution contains much less information than the distribution itself: a moment condition can be satisfied by an infinite number of distributions.\footnote{In Section \ref{ssec:robust}, we provide a more formal discussion of the qualitative difference between a moment and the distribution by using results in the information theory.} The notion of informational robustness under moment restrictions has become popular in recent literature. For example, \citet{carrasco2018optimal} study the maxmin auction under general moment restrictions. \citet{azar2013parametric} and \citet{pinar2017robust} analyze robust auctions when the mean and the variance of the valuations are known.\footnote{These papers examine the model of independent valuations. It is also possible to allow for correlated valuations while maintaining the moment restrictions on the marginal distributions. See \cite{brooks2021maxmin,zhang2021robust}.}

Our study also relates to the literature on the asymptotic optimality results for selling strategies and auction designs. For example, \cite{armstrong1999price} studies tariffs for a multiproduct monopolist firm. The proposed nonlinear tariff is almost optimal and depends only on the mean of the distribution of total surplus. In the auction setting, \citet{swinkels1999asymptotic} shows the asymptotic efficiency of the discriminatory auction when the number of independent bidders grows large.\footnote{Other works regarding the asymptotic efficiency of auctions in the independent private value setting include \citet{swinkels2001efficiency,feldman2016price}.} More recently, in the setting of interdependent valuations, \citet{du2018robust} and \citet{mclean2018very} design robust auctions in which the seller extracts the full surplus asymptotically.\footnote{Other works on asymptotic approximation of the optimal revenue using robust auctions include \citet{segal2003optimal,neeman2003effectiveness,goldberg2006competitive,dhangwatnotai2015revenue}. See also \citet[][Chapter 5]{hartline2016mechanism} for a comprehensive treatment of prior-independent mechanism design.} These results are different from ours due to the intrinsic difference between private and public goods: the auctioneer does not need to worry about the budget constraint and is focused on maximizing revenue rather than welfare.

\section{The Public-Good Provision Problem} \label{sec:setup}

\subsection{Basic setup}

Consider a sequence of communities indexed by $n = 2,3,\cdots$ In the $n$th community, there are $n$ individuals who have to make a joint decision on whether to produce some indivisible and non-excludable public good. The setup of our problem is similar to \cite{kuzmics2017public} and Chapters 3 and 4 in \cite{borgers2015introduction}. The only difference is that in our setting $n$ is not fixed since we focus on conducting an asymptotic analysis where the number of agents in the economy increases toward infinity.


Let $V_{i}$ be the private value of the public good for agent $i$. These valuations are collected into the vector $\bm{V} \equiv (V_1, V_2, \dots, V_n)$. We assume the valuations are \emph{independent and identically distributed} (iid) draws from an unknown distribution $F$, which we call the valuation distribution. We assume the support of $F$ to be $[0,\bar{v}]$,\footnote{In Section \ref{sec:welfare}, we discuss why the support is assumed to be in this form.} where $\bar{v} \in (0,\infty)$ is also unknown to the designer. Assume that $F$ is absolutely continuous with density function $f$. For the relevant expectations to exist, we assume for simplicity that $f$ is bounded away from zero on $[0,\bar{v}]$.\footnote{We can relax the assumptions that $\bar{v}$ is finite and that $f$ is bounded away from zero. We only need to assume that the appropriate moments exist. However, that makes the exposition inconvenient.}

We consider the direct mechanisms: the social planner asks each agent to report their valuations, and then make decisions on (1) whether to provide the public good and (2) transfer payments from the agents.
We use $q^n: \mathbb{R}_+^n \rightarrow \{0,1\}$ to denote a decision rule that assigns each reported valuation vector $\bm{v} \in \mathbb{R}_+^n$ to a collective decision about the public good.\footnote{In this paper, we use upper case letters to denote random variables and lower case letters to denote realizations or reported values.} For each agent $i$, we use $t^n_i:\mathbb{R}_+^n \rightarrow \mathbb{R}$ to describe the transfer that agent $i$ makes based on the valuations. The set of functions $(q^n, \{t_i^n\})$ constitutes a direct mechanism. 

There is a cost associated with the provision of the public good, which is denoted by $c_n > 0$. We explicitly allow the cost to increase with the number of agents. 
The \textit{ex-post budget} of a mechanism is the difference between the sum of the received payments and the incurred cost:
\begin{align} \label{eqn:AMT-budget}
    b^{n}(\bm{V}) & \equiv \sum t^{n}_i(\bm{V}) - c_n q^{n}(\bm{V}), 
\end{align}
The \textit{ex-ante budget} is $\mathbb{E}[b^{n}(\bm{V})]$, the expectation of the ex-post budget. 
We say that a mechanism is \textit{ex-ante budget balanced} if its ex-ante budget is non-negative, that is, $\mathbb{E}[b^{n}(\bm{V})] \geq 0$.\footnote{We use the notion ``budget balance'' as in \citet{kuzmics2017public} and \citet{borgers2015introduction}. It is also termed ``feasible'' or ``subsidy-free'' in the literature. Some literature uses the term ``budget balance'' to refer to the more stringent case where the total payment exactly equals the cost. }

For an individual $i$, her utility is determined by the provision of the public good $q^n$, her private valuation $V_i$, and transfer payment $t_i$. We assume the utility takes the linear form: $V_i q^n - t^n_i$. 
The \textit{ex-post welfare} achieved by a mechanism is the sum of $n$ agents' utilities:
\begin{align}\label{eqn:AMT-welfare}
	w^{n}(\bm{V}) & \equiv  \sum V_i q^{n}(\bm{V}) - \sum t^{n}_i(\bm{V}), 
\end{align}
This is equal to the difference between the sum of the valuations (if the public good is provided) and the total payments.
This welfare is constructed from the utilitarian perspective.\footnote{This definition of social welfare is standard in the literature. From the social planner's perspective, the production cost does not enter into welfare because it does not directly affect agents' utility. The cost only plays a role through the budget constraint.} The \textit{ex-ante welfare} is $\mathbb{E}[w^{n}(\bm{V})]$, the expectation of the ex-post welfare. 

If the social planner observes the private valuations $\bm{V}$, then the public good problem is
\begin{align*}
    \max_{(q^n, \{t_i^n\})} w^{n}(\bm{V}) \text{ subject to } b^{n}(\bm{V}) \geq 0.
\end{align*}
That is, the social planner maximizes the ex-post welfare under the ex-post budget balance constraint. The solution to this problem is to provide the public good whenever the sum of valuations exceeds the cost $c_n$. This provision rule is called the \emph{efficient} (or the \emph{first-best}) decision rule:
\begin{align} \label{eqn:efficient-decision-rule}
	\mathbf{1}\bigs\{ \sum v_i \geq c_n \bigs\},
\end{align}
We use $w^*(\bm{V})$ to denote the \textit{efficient ex-post welfare} achieved under the efficient rule:
\begin{align} \label{eqn:efficient-welfare}
	w^*(\bm{V}) \equiv \bigs( \sum V_i - c_n \bigs) \mathbf{1}\bigs\{ \sum V_i \geq c_n \bigs\}.
\end{align}
The expectation $\mathbb{E} \left[ w^*(\bm{V}) \right] $ is the \textit{efficient ex-ante welfare}. 

In practice, the social planner does not observe the private valuations $\bm{V}$, a problem often referred to as asymmetric information. We need to design the mechanism $(q^n, \{t_i^n\})$ in a way that the agents are willing to participate and report the true valuation. For this purpose, we introduce the following two criteria: dominant-strategy incentive compatibility and ex-post individual rationality.
Let $\bm{v} \in \mathbb{R}_+^n$ be a vector of realized valuations. We follow the convention of using $\bm{v}_{-i}$ to denote the vector $\bm{v}$ excluding the $i$th element. The realized utility of agent $i$ who chooses to report $v'_i$ is
\begin{align} \label{eqn:utility}
	v_i q^n(v'_i,\bm{v}_{-i}) - t^n_i(v'_i,\bm{v}_{-i}).
\end{align}
A mechanism $(q^n, \{t_i^n\})$ is \textit{dominant-strategy incentive compatible} if, for each agent $i$, each set of valuations $\bm{v} \in \mathbb{R}_+^n$, and each report $v_i' \in \mathbb{R}_+$,
\begin{align} \label{eqn:dominant-strategy incentive compatible}
	v_i q^n(v_i,\bm{v}_{-i}) - t^n_i(v_i,\bm{v}_{-i}) \geq v_i q^n(v_i',\bm{v}_{-i}) - t^n_i(v_i',\bm{v}_{-i}).
\end{align}
A mechanism is \textit{ex-post individually rational} if for each agent $i$ and each set of valuations $\bm{v} \in \mathbb{R}_+^n$,
\begin{align} \label{eqn:ex-post individually rational}
	v_i q^n(v_i,\bm{v}_{-i}) - t^n_i(v_i,\bm{v}_{-i}) \geq 0.
\end{align}

By the routine revenue equivalence result in the public-good scenario \citep{borgers2015introduction,kuzmics2017public}, we know that a mechanism is dominant-strategy incentive compatible and ex-post individually rational if and only if the following conditions hold:
\begin{align}
    & q^n \text{ is non-decreasing in every entry}, \label{eqn:q-nondecreasing} \\ 
    & t^n_i(0,\bm{v}_{-i}) \leq 0, \text{ for all } \bm{v}_{-i} \in \mathbb{R}_+^{n-1}, \label{eqn:payment-lowest-type} \\ 
	& t^n_i(\bm{v}) = \hat{v}_i(\bm{v}_{-i})q^n(\bm{v}) + t^n_i(0,\bm{v}_{-i}), \text{ for all } \bm{v} \in \mathbb{R}_+^{n}, \label{eqn:rev-equiv}
\end{align}
where
\begin{align} \label{eqn:pivotal-value}
	\hat{v}_i(\bm{v}_{-i}) \equiv \inf\{ v \geq 0 : q^n(v,\bm{v}_{-i}) =1 \}
\end{align}
denotes the pivotal value of agent $i$ given the other agents' valuations $\bm{v}_{-i}$. An equivalent way to write Equation (\ref{eqn:rev-equiv}) is 
\begin{align*}
    t^n_i(\bm{v}) = v_i q^n(\bm{v}) - \int_0^{v_i} q^n(v,\bm{v}_{-i})dv + t^n_i(0,\bm{v}_{-i}).
\end{align*}
By changing the order of integration, we can write the expected payment from agent $i$ as
\begin{align} 
	\mathbb{E} \left[ t^n_i(\bm{V}) \right] = \mathbb{E} \left[ \psi(V_i)  q^n(\bm{V}) + t^n_i(0,\bm{V}_{-i}) \right], \label{eqn:ex-ante-budget}
\end{align}
where $\psi$ is the \textit{virtual valuation function} defined by
\begin{align} \label{eqn:virtual-valuation}
	\psi: v \mapsto v - \frac{1-F(v)}{f(v)}.
\end{align}
The expectation in (\ref{eqn:ex-ante-budget}) is taken over the valuations $\bm{V}$.

Given the two criteria introduced above, we can specify the public good problem under asymmetric information:
\begin{align*}
    \max_{(q^n, \{t_i^n\})} & \mathbb{E}[w^{n}(\bm{V})] \text{ subject to } \mathbb{E}[b^{n}(\bm{V})] \geq 0, \\
    & \text{ and } (q^n, \{t_i^n\}) \text{ satisfies Conditions (\ref{eqn:q-nondecreasing}) - (\ref{eqn:rev-equiv}).} 
\end{align*}

It is well-known that the efficient mechanism is not the solution to the above problem. In particular, the efficient mechanism cannot simultaneously satisfy dominant-strategy incentive compatibility, ex-post individual rationality, and ex-ante budget balance.\footnote{See, for example, \cite{myerson1983efficient} and \cite{mailath1990asymmetric}.}
The solution to the above problem is referred to as the \textit{second-best mechanism} in the literature. The second-best mechanism suffers from two problems. First, the construction depends on the entire distribution $F$ and hence is not robust. Second, even if we know the distribution, it is difficult to obtain an analytical expression of the mechanism due to the complexity of the optimization problem. In this paper, we do not study the second-best mechanism. Instead, we propose mechanisms that are less informationally demanding and have a simple and explicit analytical expression. Moreover, these mechanisms
achieve welfare comparable to that of the second-best mechanism when $n$ is large.  

\subsection{Asymptotic criteria}

The setup introduced above is standard in the mechanism design literature. Next, we present a novel asymptotic approach for studying the public goods problem. Considering the impossibility result mentioned previously, we propose two optimality criteria, which are respectively the asymptotic relaxation of the ex-ante budget balance condition and efficiency. 

\begin{definition} \label{def:eventually ex-ante budget balanced}
	A sequence of mechanisms $( q^n, \{t^n_i\} )$ (indexed by the number of agents $n$) is \textit{eventually ex-ante budget balanced}, if there is an integer $n_0$ such that
	\begin{align*}
		\mathbb{E} \left[ b^n(\bm{V}) \right]  \geq 0, \forall n>n_0,
	\end{align*}
    where $b^n$ is defined in (\ref{eqn:AMT-budget}).
    That is, the mechanism is ex-ante budget balanced for large enough $n$.
\end{definition}


We examine the ex-ante budget rather than the ex-post one. A rationale for this approach is offered by \cite{Kunimoto2021incentive}. In their Theorem 4, the authors demonstrate that no DSIC and EPIR mechanism can achieve ex-post budget balance, provided that the mechanism adheres to a richness condition. This condition requires that the public good be supplied if all agents, except one, have their highest type—a relatively mild condition in large economies. This reasoning supports the relaxation of the budget balance constraint into eventual EABB.\footnote{The valuations explored in \cite{Kunimoto2021incentive} are discrete. However, this aspect does not impact the results qualitatively, as both discrete and continuous distributions can be addressed using the general measure theory. Specifically, discrete probability measures have density functions with respect to the counting measure, enabling the computation of virtual valuations accordingly.}


\begin{definition} \label{def:asymptotically efficient}
	A sequence of mechanisms $( q^n, \{t^n_i\} )$ is \textit{asymptotically efficient} if as $n \rightarrow \infty$,
	\begin{align} \label{eqn:welfare-ratio}
		\frac{\mathbb{E}[w^*(\bm{V})] - \mathbb{E}\left[ w^n(\bm{V}) \right] }{\mathbb{E}[w^*(\bm{V})]} \rightarrow 0,
	\end{align}
    where $w^n$ and $w^*$ are defined in (\ref{eqn:AMT-welfare}) and (\ref{eqn:efficient-welfare}), respectively.
\end{definition}

The ratio in (\ref{eqn:welfare-ratio}) can be interpreted as the welfare regret ratio of the mechanism designer. The numerator is the difference between the efficient ex-ante welfare and the achieved ex-ante welfare by the mechanism $( q^n, \{t^n_i\} )$, which is often interpreted as the regret of the decision-maker associated with the current valuation distribution. The denominator in (\ref{eqn:welfare-ratio}) is the efficient ex-ante welfare. Therefore, this ratio represents the normalized regret of the mechanism designer. The regret ratio is frequently used in the literature as a criterion to rank probabilistic objects. The axiomatic foundation of such criterion from the maxmin perspective is established by \cite{brafman2000axiomatic}.\footnote{They refer to the ``regret ratio'' in this paper as ``competitive ratio.''} The literature on approximately optimal mechanism design usually aims to find a mechanism that guarantees a fraction of the benchmark welfare or revenue, which can also be considered as the regret ratio criterion. See \citet{roughgarden2019approximately} for a survey of that literature. 

Our goal is to propose dominant-strategy incentive compatible and ex-post individually rational mechanisms that are both eventually ex-ante budget balanced and asymptotically efficient, but involve much less knowledge about the valuation distributions than the second-best mechanism. Moreover, we characterize the growth rate of the budget surplus as well as the convergence rate of the welfare regret ratio for the proposed mechanisms.

\section{Welfare Maximization with Many Agents} \label{sec:welfare}

\subsection{AMT mechanisms}

We want to generalize the mechanism introduced in (\ref{eqn:simplest-AMT}) by considering transformations of the valuations. By allowing for transformations, we can incorporate many important mechanisms into the framework.
Let $h:\mathbb{R}_+ \rightarrow \mathbb{R}_+$ represent a transformation on the space of valuations. We denote $\mu_{h} \equiv \mathbb{E}[h(V_i)]$ as the mean of the transformed value $h(V_i)$ and introduce the following mechanism.

\begin{definition} \label{def:AMT}
	The adjusted mean-thresholding (AMT) mechanism $(q^n,\{t_i^n\})$ with transformation $h$ and adjustment term $\alpha_n$ is defined as
	\begin{alignat*}{2}
		\text{decision rule: } & q^{n}\left( \bm{v} \right) && \equiv \mathbf{1}\bigs\{ \sum h(v_i) \geq n \mu_{h} + \alpha_n \bigs\}, \\
		\text{transfer payment: } & t_i^{n}(\bm{v}) && \equiv  \hat{v}_i^{n}(\bm{v}_{-i})q^{n}(\bm{v}), 1 \leq i \leq n,
	\end{alignat*}
	where $\hat{v}_i^{n}$ is the pivotal value of agent $i$ under the decision rule $q^{n}$ as defined in (\ref{eqn:pivotal-value}). More specifically, the transfer payment $t_i^{n}$ is specified by (\ref{eqn:rev-equiv}) with $t_i^{n}(0,\bm{v_{-i}})$ set to $0$. For simplicity, we suppress the dependence on $h$ and $\alpha_n$ in the notation of the mechanism $(q^{n},\{t_i^{n}\})$. We will explicitly state the transformation $h$ whenever necessary.
\end{definition}

The AMT mechanism is called as such because the mechanism transforms the valuations using $h$ and compares the sum of transformed values, $\sum h(V_i)$, with a threshold. This threshold is the sum of the means, $n \mu_{h}$, adjusted by the term $\alpha_n$. We set the adjustment term to be negative and decreasing towards $-\infty$. The reason $\alpha_n$ is called the adjustment term is that, as shown later, the asymptotic order of $\alpha_n$ is smaller than that of $n \mu_{h}$ for the AMT mechanism to work. That is, the leading term in the threshold is the sum of means $n \mu_{h}$, and $\alpha_n$ is an adjustment that slightly raises the allocation probability of the public good. The reason for introducing the transformation $h$ in Definition \ref{def:AMT} is to include a wide range of mechanisms. Below are some examples that can be seen as special cases of the AMT mechanism with different choices of the transformation $h$ and adjustment term $\alpha_n$.

\begin{example} [Identity Transformation] \label{eg:identity}
	Let $h$ be the identity mapping $id$, i.e., $h(v) = id(v) \equiv v$. Denote $\mu \equiv \mathbb{E}[V_i]$. Then the AMT mechanism with the identity transformation has decision rule
	\begin{align*}
		q^{n}\left( \bm{v} \right) \equiv \mathbf{1}\bigs\{ \sum v_i \geq n \mu + \alpha_n \bigs\}.
	\end{align*}
	This decision rule is the one specified by (\ref{eqn:simplest-AMT}) in the Introduction.
\end{example}

\begin{example} [Pivot Mechanism] \label{eg:pivot}
	The pivot mechanism is the mechanism that is given by the efficient decision rule (\ref{eqn:efficient-decision-rule}) and by the transfer payment defined by (\ref{eqn:rev-equiv}) with $t_i(0,\bm{v_{-i}})$ set to $0$.\footnote{This definition is equivalent to Definition 3.8 in \cite{borgers2015introduction}.} By rewriting the efficient decision rule
	\begin{align*}
		\mathbf{1}\bigs\{ \sum v_i \geq c_n \bigs\} = \mathbf{1}\bigs\{ \sum v_i \geq n \mu + \underbrace{\bigs(c_n - n \mu \bigs)}_{\alpha_n} \bigs\},
	\end{align*}
	we can see that the pivot mechanism is a special case of the AMT mechanism in Example \ref{eg:identity} with a particular choice of the adjustment term $\alpha_n = c_n - n \mu$.
\end{example}

\begin{example} [Virtual Valuation Transformation] \label{eg:vv}
	By setting $h = \psi$, where $\psi$ is defined in (\ref{eqn:virtual-valuation}), we obtain the AMT mechanism with the virtual valuation transformation, which has decision rule:
	\begin{align*}
		q^{n}\left( \bm{v} \right) \equiv \mathbf{1}\bigs\{ \sum \psi(v_i) \geq \alpha_n \bigs\}.
	\end{align*}
	This mechanism chooses to provide the public good if and only if the sum of virtual valuations exceeds the adjustment term $\alpha_n$. Notice that the virtual valuation is always mean zero, that is, $\mu_{\psi}=0$.\footnote{Since $F$ is supported on $[0,\bar{v}]$, we have $\int_{0}^{\bar{v}} (1-F(v)) dv = \mathbb{E}[V_i]$, $\mathbb{E}[\psi(V_i)] = \mathbb{E}[V_i] - \int_{0}^{\bar{v}} (1-F(v)) dv$ = 0. See the end of Section \ref{ssec:hellwig} for a discussion of the support.}
	
	It is important to note that, the construction of this mechanism depends on the entire valuation distribution through $\psi$. However, the ex-ante budget of this mechanism has a simple form, which we examine next.
\end{example}

\subsection{Preliminary analysis with known and regular distribution} \label{ssec:hellwig}

We start the analysis of the AMT mechanisms with the special case of Example \ref{eg:vv}, which requires that the valuation distribution is known to the mechanism designer and satisfies the Myerson regularity condition. This case is used to familiarize the reader with the asymptotic optimality criteria introduced in Definitions \ref{def:eventually ex-ante budget balanced} and \ref{def:asymptotically efficient} and showcase the intuitions of the AMT mechanism in general.

The decision rule $q^{n}$ is non-decreasing when the valuation distribution satisfies the Myerson regularity condition, under which the mechanism $(q^{n},\{t^{n}_i\})$ is dominant-strategy incentive compatible and ex-post individually rational.\footnote{See Equation (\ref{eqn:rev-equiv}) and the discussion before it.} From Equation (\ref{eqn:ex-ante-budget}), we can write the ex-ante budget as
\begin{align} \label{eqn:total-expected-payment-vv}
	\mathbb{E}\bigs[b^{n}(\bm{V})\bigs] = \underbrace{\mathbb{E} \bigs[\sum \psi(V_i) \mathbf{1}\bigs\{ \sum \psi(V_i) \geq \alpha_n \bigs\} \bigs]}_{\text{total expected payment}} - \underbrace{\mathbb{E}\bigs[ c_n \mathbf{1}\bigs\{ \sum \psi(V_i) \geq \alpha_n \bigs\}\bigs]}_{\text{expected cost}}.
\end{align}
The total expected payment is simply the $\alpha_n$-truncated mean of the total virtual valuation $\sum \psi(V_i)$. This simple expression of the total expected payment is why we start the analysis of AMT mechanisms with the transformation $h=\psi$.

We will briefly explain the intuition why the AMT mechanism with $h=\psi$ works in the setting of many agents.\footnote{A similar intuition can be found in the proof of Proposition 3 in \cite{hellwig2003public}. Here, we provide a more detailed discussion regarding the rate of the threshold $\alpha_n$.}
Since the virtual valuations are independent and mean zero, according to the central limit theorem, the total virtual valuation behaves similar to a normal random variable with zero mean. Therefore, the total expected payment is approximately the $\alpha_n$-truncated normal mean. Since values below $\alpha_n$ are truncated, this truncated normal mean is always positive due to the symmetry in the normal distribution. We expect this truncated mean to increase towards infinity if $\alpha_n$ does not move toward negative infinity too quickly. When the truncated mean grows faster than the cost $c_n$, the ex-ante budget is balanced eventually. 

On the other hand, if the cost increases at a slower rate than the sum of expected valuations, the efficient decision rule $\mathbf{1}\{ \sum V_i \geq c_n \}$ nearly always chooses to provide the public good when there are many agents. Therefore, if $\alpha_n \rightarrow -\infty$ fast enough, then the AMT mechanism would almost always make the same decision as the efficient mechanism would, and hence approximate the efficient ex-ante welfare.

The above discussion demonstrates two compelling forces that represent the trade-off between budget balance and welfare maximization. For the budget to balance, we would want the adjustment term to decrease slowly so that the public good is allocated less often and more payments can be collected. For the welfare regret ratio to converge to zero, we would want the adjustment term to decrease rapidly so that the public good is allocated more often and the agents' welfare can be improved. Therefore, the performance of the mechanism crucially depends on the asymptotic behavior of the threshold.

We utilize the Berry-Esseen theorem in probability theory to determine the middle ground of the aforementioned asymptotic trade-off between budget and welfare. The Berry-Esseen theorem is a stronger version of the central limit theorem that characterizes the convergence rate of the sample average towards the normal distribution. It takes into account the approximation error that stems from the difference of the normal distribution and the true distribution, which allows us to obtain a lower bound on the budget.  Our next theorem shows that the AMT mechanism $(q^{n},\{t^{n}_i\})$ can be both eventually ex-ante budget balanced and asymptotically efficient, if (1) the cost does not grow too fast, and (2) the mechanism designer carefully calibrates the adjustment term $\alpha_n$.

\begin{theorem} \label{thm:hellwig} 
	Assume the valuation distribution $F$ is Myerson regular, then the AMT mechanism $(q^{n},\{t^{n}_i\})$ with transformation $h = \psi$ (Example \ref{eg:vv}) is dominant-strategy incentive compatible and ex-post individually rational. The following limiting statements hold true for this mechanism as $n \rightarrow \infty$.
	\begin{enumerate} [label = (\roman*)]
		\item Assume that the cost satisfies
        \begin{align} \label{eqn:cost-slower-than-root-n}
            \limsup_{n \rightarrow \infty} \frac{c_n}{n^{1/2 - \varepsilon}} < \infty, \text{ for some } \varepsilon \in (0,1/2).
        \end{align}
        If we set the adjustment term $\alpha_n$ to satisfy
        \begin{align} \label{eqn:alpha-root-nlogn}
            \lim_{n \rightarrow \infty} \frac{\lvert \alpha_n \rvert}{\sqrt{n \log n}} = 0,
        \end{align}
        then the mechanism is eventually ex-ante budget balanced.
		\item Assume that the cost satisfies 
        \begin{align*}
            \lim_{n \rightarrow \infty} \frac{c_n}{n} = 0.
        \end{align*}
        If we set the adjustment term $\alpha_n$ to satisfy
        \begin{align} \label{eqn:alpha-root-n}
            \lim_{n \rightarrow \infty} \frac{\lvert \alpha_n \rvert}{\sqrt{n }} = \infty,
        \end{align}
        then the mechanism is asymptotically efficient.

		\item In particular, if the cost satisfies Condition (\ref{eqn:cost-slower-than-root-n}), and we set the adjustment term to satisfy Conditions (\ref{eqn:alpha-root-nlogn}) and (\ref{eqn:alpha-root-n}), then the AMT mechanism $(q^{n},\{t^{n}_i\})$ with transformation $h = \psi$ is both eventually ex-ante budget balanced and asymptotically efficient.
	\end{enumerate}
\end{theorem}

\begin{remark}
    The growth rate of the ex-ante budget and the convergence rate of the welfare regret ratio are provided in the proof in Appendix \ref{sec:proofs}. As an example, the adjustment term can be set as $\alpha_n = \sqrt{n \sqrt{\log n}}$, a term that simultaneously fulfills Conditions (\ref{eqn:alpha-root-nlogn}) and (\ref{eqn:alpha-root-n}).
\end{remark}

\begin{remark}
The proof of Theorem \ref{thm:hellwig} also shows that the mechanism generates an ex-ante budget surplus. This surplus grows at the rate of $\sqrt{n}\sigma_\psi\phi\left( \frac{\alpha_n}{\sqrt{n} {\sigma}_{\psi}} \right)$, where $\phi$ is the probability density function of the standard normal distribution, and $\sigma_\psi$ is the standard deviation of $\psi(V_i)$.
\end{remark}

Theorem \ref{thm:hellwig} states that if we control $\alpha_n$ to diverge at a rate between $\sqrt{n}$ and $\sqrt{n\log n}$, the mechanism can be both eventually ex-ante budget balanced and asymptotically efficient provided that the cost increases slower than $\sqrt{n}$-rate. Later, we will show that the $\sqrt{n}$-rate of cost is also necessary to balance the budget.

Before ending this section, we will briefly discuss the role of the support of $F$. The analysis in this section relies on the fact that the virtual valuation has zero mean, which is true when the lower bound of the support is exactly at zero. In general, if $F$ is supported on $[\underline{v},\bar{v}]$, then the expectation of the virtual valuation is equal to $\underline{v}$. For a small population, it is reasonable to consider a case where $\underline{v} > 0$. However, we argue that, when conducting the asymptotic analysis for $n \rightarrow \infty$ as in this paper, setting $\underline{v} = 0$ is the correct assumption. In a large economy, setting $\underline{v} = 0$ means that we do not rule out the possibility that some agents hardly benefit from the public good. 
On the other hand, assuming $\underline{v} > 0$ for a large economy means that everyone in the infinite population can receive a significant benefit from the public good. In this case, the problem becomes trivial as the ex-ante budget can be easily balanced while maintaining efficiency. 
Therefore, we assume that the lowest valuation starts from $0$, which is common in the literature \citep[e.g.,][]{hellwig2003public}.\footnote{However, we can extend our analysis to allow for a point mass at $0$ in the distribution $F$. In that case, the density $f$ is the Radon-Nykodim derivative of the probability measure $F$ with respect to the measure $L + \delta_0$, where $L$ is the Lebesgue measure on $\mathbb{R}_+$ and $\delta_0$ is the Dirac measure at $0$. This underlying measure is $\sigma$-finite, and hence our analysis applies.}

It is important to note that having $\underline{v}=0$ rather than $\underline{v}>0$ is not only reasonable but, to some extent, necessary.\footnote{We are grateful to a referee for pointing this out.} If $\underline{v}$ is strictly positive and the cost $c_n$ grows at the $\sqrt{n}$ rate, then for sufficiently large $n$, we have $n\underline{v} > c_n$. This implies that we can always provide the public good and ask each participant to share the cost. Specifically, the mechanism has a decision rule $q^n(\bm{v})=1$ and payment transfers $t_i^n(\bm{v})=c_n/n$ for all $i$. The suggested mechanism is DSIC and exactly budget balanced for any $n$, and it also satisfies ex-post efficiency and EPIR for sufficiently large $n$. This positive result in large economies offers additional justification for examining the more stringent case of $\underline{v}=0$.

It is also of interest to study the case where the lower bound $\underline{v}$ is negative. \cite{kuzmics2017public} offer real-world examples, such as a seller selling to a group or land rezoning. However, our theoretical analysis cannot be directly applied to cases involving negative valuations. The reason is that, in such instances, $\mathbb{E}[\psi(V_i)] = \underline{v}<0$. This implies that the sum $\sum \psi(V_i)$ is not centered around zero but rather at $n\underline{v} \rightarrow -\infty$. As a result, our earlier analysis of the truncated mean no longer applies. Therefore, following the approach of both \cite{mailath1990asymmetric} and \cite{hellwig2003public}, we refrain from considering cases where $\underline{v} < 0$. We defer a more comprehensive examination of this issue to future research.

\subsection{The robust mechanism} \label{ssec:robust}

There are two reasons why the mechanism examined in the previous subsection is unsatisfactory. First, the construction of such a mechanism requires the full knowledge of the virtual valuation function $\psi$. Second, when the valuation distribution is not Myerson regular, the mechanism may violate dominant-strategy incentive compatibility. We now attempt to remove these two assumptions and extend the previous results to general AMT mechanisms.

To find out when the general AMT mechanism works, we want to identify the properties of the virtual valuation function that are useful in deriving Theorem \ref{thm:hellwig} and then impose these properties on the transformation function $h$. First, we need $h$ to be increasing so that the mechanism is dominant-strategy incentive compatible. Second, we want to approximate the sum of expected payments by using a truncated normal mean as before. In this case, the total expected payment is
\begin{align} \label{eqn:total-payment-h}
	\mathbb{E}\left[ \sum t_i^{n}(\bm{V}) \right] = \mathbb{E} \left[ \sum \psi(V_i) \mathbf{1}\bigs\{ \sum (h(V_i) - \mu_{h}) \geq \alpha_n \bigs\} \right].
\end{align}
Here the total virtual valuation is truncated by the variable $\sum (h(V_i) - \mu_{h})$, which makes this total expected payment more complicated to analyze than the one in Equation (\ref{eqn:total-expected-payment-vv}). Nevertheless, most of the previous arguments can be recovered. By definition, each $h(V_i) - \mu_{h}$ is mean zero. Then by the central limit theorem, the two sums $\sum \psi(V_i)$ and $\sum (h(V_i) - \mu_{h})$ are approximately joint normal.
If $\psi(V_i)$ and $h(V_i)$ are \emph{positively correlated}, then we can infer that this truncated normal mean is positive and increases eventually towards infinity, which gives a similar result as before. However, if the correlation between $\psi(V_i)$ and $h(V_i)$ is negative, then the truncation in (\ref{eqn:total-payment-h}) becomes qualitatively different from (\ref{eqn:total-expected-payment-vv}).

Based on the above discussion, the question now becomes what additional requirements are needed for an increasing function $h$ to be positively correlated with the virtual valuation. A straightforward answer is that if $F$ is Myerson regular so that $\psi$ is increasing, then $\psi$ and $h$ are positively correlated.\footnote{For every random variable $X$ and increasing functions $g_1$ and $g_2$, the random variables $g_1(X)$ and $g_2(X)$ are positively correlated \citep[see, e.g.,][]{thorisson1995coupling}.} Perhaps more surprisingly, this positive correlation does not depend on the Myerson regularity condition; it simply results from the construction of the virtual valuation function. We summarize this result in the following lemma.

\begin{lemma} [Correlation between $\psi$ and $h$] \label{lm:corr-psi-h}

	If the function $h$ is continuous on $[0,\bar{v}]$, then the covariance between $\psi$ and $h$, denoted by $\sigma_{\psi h}$, is equal to  
	\begin{align*}
		\sigma_{\psi h} \equiv \mathbb{E}[\psi(V_i) h(V_i)] = \mathbb{E} \left[ \int_0^{V_i} v \text{ } dh(v) \right],
	\end{align*}
	where $dh(v)$ denotes integration with respect to the Lebesgue-Stieltjes measure associated with $h$. In particular, $\sigma_{\psi h} > 0$ if $h$ is increasing and continuous on $[0,\bar{v}]$.
\end{lemma}

The above lemma states that any increasing and continuous $h$ is positively correlated with the virtual valuation. As discussed above, this lemma shows that the intuition from Section \ref{ssec:hellwig} still applies when we replace $\psi$ by a general increasing function $h$. We further illustrate this result with Example \ref{eg:identity}. 

\begin{example} [continues = eg:identity]
	By using Lemma \ref{lm:corr-psi-h}, we can show that the covariance between the valuation $V_i$ and the virtual valuation $\psi(V_i)$ is in fact equal to one half of the second moment of $V_i$:
\begin{align*}
	cov(V_i,\psi(V_i)) = \mathbb{E}\left[ \int_0^{V_i} v dv \right] = \frac{1}{2}\mathbb{E}[ V_i^2 ] >0.
\end{align*} 
Therefore, the valuation is always positively correlated with the virtual valuation, even without the Myerson regularity condition. This means that if the mechanism designer observes a large $V_i$, it is likely that agent $i$'s virtual valuation is also large. Hence, it is reasonable to use the valuation in place of the virtual valuation in guiding the public-good provision decision that aims to collect payments. To the best of our knowledge, this is a new finding in regards to the literature, and we believe it is of independent research interest.
\end{example}


We now study the ex-ante budget and ex-ante welfare of the AMT mechanism $(q^{n}, \{t_i^{n}\})$ with a general transformation $h$, which are defined in Equations (\ref{eqn:AMT-budget}) and (\ref{eqn:AMT-welfare}), respectively. The following result is the main theorem of this paper, which states that the results obtained in Theorem \ref{thm:hellwig} can be generalized to any AMT mechanisms $(q^{n},\{t^{n}_i\})$ with an increasing and continuous transformation $h$. It uses the multivariate Berry-Esseen theorem \cite[e.g.,][]{bentkus2005lyapunov} to establish the convergence rate of the multivariate central limit theorem.

\begin{theorem}\label{thm:inid_irreg_welfare} 
    Assume that the transformation $h$ is increasing and continuous on $[0,\bar{v}]$, then the results in Theorem \ref{thm:hellwig} applies to the AMT mechanism with transformation $h$. In particular, if the cost satisfies Condition (\ref{eqn:cost-slower-than-root-n}), and we set the adjustment term $\alpha_n$ to satisfy Conditions (\ref{eqn:alpha-root-nlogn}) and (\ref{eqn:alpha-root-n}), then the AMT mechanism with transformation $h$ is dominant-strategy incentive compatible, ex-post individually rational, eventually ex-ante budget balanced, and asymptotically efficient.
\end{theorem}

\begin{remark}
	The growth rate of the ex-ante budget and the convergence rate of the welfare regret ratio are provided in the proof in Appendix \ref{sec:proofs}.
	An optimal convergence rate of the welfare regret ratio does not exist. 
	This is because the faster $\alpha_n$ moves toward negative infinity the faster the welfare regret ratio converges, but the asymptotic order of $\alpha_n$ needs to be strictly smaller than $\sqrt{n \log n}$ for the eventually ex-ante budget balanced condition to hold. For any mechanism $(q^{n},\{t^{n}_i\})$ that is eventually ex-ante budget balanced, we can strictly improve its welfare convergence rate by increasing the rate of $\alpha_n$ while keeping it strictly slower than $\sqrt{n \log n}$.
\end{remark}

A general AMT mechanism can be implemented with many choices of the increasing function $h$ and does not require any knowledge of the virtual valuation $\psi$ (not even the monotonicity of $\psi$ imposed by the Myerson regularity condition). In particular, as demonstrated in Example \ref{eg:identity}, the mechanism designer could simply employ the identity function. 
Furthermore, Theorem \ref{thm:inid_irreg_welfare} shows that the eventual ex-ante budget balance condition and asymptotically efficiency can still be achieved under the same assumptions on $c_n$ and $\alpha_n$ as in Theorem \ref{thm:hellwig}. 



The construction of the AMT mechanism only depends on the valuation distribution $F$ through the moment $\mu_h$. This dependence is much weaker than the case of $h = \psi$. This is because the moment $\mu_h$ is a scalar that contains much less information about $F$ than the virtual valuation function $\psi$. A moment condition can be satisfied by an infinite number of distributions which can potentially be very different, while the virtual valuation function $\psi$ uniquely determines the distribution function $F$.\footnote{To see this, note that the definition of $\psi$ implies $
d\log(1-F(v)) / dv = -1/(v-\psi(v))$.} 

We can formalize this notion of informational robustness with well-known results in statistics.
With a sample of $m$ iid observations, the moment $\mu_h$ can be estimated by the sample average with the rate of $m^{-1/2}$ under standard conditions. The estimation of the density (which is required for constructing the virtual valuation), on the other hand, is much more difficult for the following reasons. First, the density $f(v)$ can only be estimated at a slower rate of $m^{-1/3}$ for each $v \in [0,\bar{v}]$.\footnote{See, for example, \cite{stone1980optimal}. If the density $f$ is differentiable, then the optimal rate of convergence is $m^{-1/3}$. Please refer to that paper for the exact definition of optimal convergence rates.} This is due to the nonparametric nature of the density estimation problem. Second, we have to estimate the entire density function, which further slows down the convergence rate (by a $\log m$ factor when using the supremum norm).\footnote{The optimal uniform convergence rate of density estimators is derived in \cite{STONE1983optimal}.} The $m$ iid data points used for estimation are not from the current mechanism as that would distort the incentives. Instead, they can be taken from similar public projects executed in the past.

We end this subsection with a discussion of the pivot mechanism introduced in Example \ref{eg:pivot}. 
\begin{example} [continues = eg:pivot]
	The pivot mechanism implements the efficient decision rule:
\begin{align*}
	\mathbf{1} \bigs\{ \sum v_i \geq c_n \bigs\} = \mathbf{1}\bigs\{ \sum v_i \geq n \mu + c_n - n \mu \bigs\} = q^{n}(\bm{v}),
\end{align*}
with the adjustment term being $\alpha_n = c_n - n \mu $, which decreases much faster than the $\sqrt{n \log n}$ rate. Hence, the pivot mechanism runs a budget deficient, which is a well-known result in the literature. \footnote{To obtain a crude estimate, we can use Lemma \ref{lm:bound-mean-multi} in the Appendix, which indicates that the expected total payment from the pivot mechanism is at most $O\bigs(n^{1/4}\bigs)$. Therefore, the pivot mechanism runs a growing budget deficit if the cost grows faster than that rate.}
\end{example}

\subsection{Tightness of the condition on cost}\label{sec:tightness}

In both Theorem \ref{thm:hellwig} and \ref{thm:inid_irreg_welfare}, the cost $c_n$ is specified to grow slower than the $\sqrt{n}$-rate. This restriction on the growth rate of cost is not only sufficient but also, in a sense, necessary for achieving both budget balance and efficiency asymptotically. In this section we show that the optimal revenue (total expected payment) grows at the $\sqrt{n}$-rate. 
Therefore, if the cost grows faster than the $\sqrt{n}$-rate, the budget balance condition will require the provision probability to diminish as the population size grows.\footnote{Notice that, however, the cost cannot grow exactly at the $\sqrt{n}$-rate, because in that case we would need the adjustment term $\alpha_n$ to also decrease at the $\sqrt{n}$-rate to balance the budget, leading to an inefficient allocation decision.}

This result on the growth rate of the revenue is not restricted to the class of dominant-strategy incentive compatible and ex-post individually rational mechanisms. In fact, it holds for a larger class of mechanisms defined as follows. For a mechanism $(q^{n},\{t^{n}_i\})$, we say it is \emph{incentive compatible} if for each agent $i$, valuation $v_i$, and report $v_i'$,
\begin{align*} \label{eqn:incentive compatible}
	v_i \mathbb{E}[q^n(v_i,\bm{V}_{-i})] - \mathbb{E}[t^n_i(v_i,\bm{V}_{-i})] \geq v_i \mathbb{E}[q^n(v_i',\bm{V}_{-i})] - \mathbb{E}[t^n_i(v_i',\bm{V}_{-i})].
\end{align*}
This is the interim (or Bayesian) version of the dominant-strategy incentive compatibility condition specified in (\ref{eqn:dominant-strategy incentive compatible}) and is therefore weaker than the dominant-strategy incentive compatible condition. We say the mechanism is \textit{individually rational} if for each agent $i$ and valuation $v_i$,
\begin{align*} 
	v_i \mathbb{E}[q^n(v_i,\bm{V}_{-i})] - \mathbb{E}[t^n_i(v_i,\bm{V}_{-i})] \geq 0.
\end{align*}
This is the interim (or Bayesian) version of the ex-post individual rationality condition defined in (\ref{eqn:ex-post individually rational}). The class of dominant-strategy incentive compatible and ex-post individually rational mechanisms is contained in the class of incentive compatible and individually rational mechanisms.

The following theorem shows that the optimal revenue for the class of incentive compatible and individually rational mechanisms grows at the $\sqrt{n}$-rate.

\begin{theorem} [Tightness of the Condition on Cost] \label{thm:tightness}
	The maximum total expected payment for any sequence of incentive compatible and individually rational mechanisms grows at the $\sqrt{n}$-rate. That is,
	\begin{align*}
		\limsup_{n \rightarrow \infty} \sup_{(q^n,\{t_i^n\})} \mathbb{E} \left[ \sum t^n_i(\bm{V}) \right] \big/ \sqrt{n} < \infty,
	\end{align*}
    where the supremum $\sup_{(q^n,\{t_i^n\})}$ is taken over all incentive compatible and individually rational mechanisms.
	Consequently, if the cost satisfies
    \begin{align*}
        \lim_{n \rightarrow \infty} \frac{c_n}{\sqrt{n}} = \infty,
    \end{align*}
    then for any sequence of mechanisms $(q^n,\{t_i^n\})$ that is incentive compatible, individually rational, and eventually ex-ante budget balanced, the provision probability of the public good converges to zero under the following rate
	\begin{align*}
		\limsup_{n \rightarrow \infty} \frac{\mathbb{P}(q^n(\bm{V}) = 1)}{\sqrt{n}/c_n} < \infty .
	\end{align*}
\end{theorem}

\begin{remark}
	As pointed out by \cite{mailath1990asymmetric} in their Theorem 2, if $c_n \leq n(\mu - \delta)$ for some $\delta \in (0,\mu)$, then the efficient (first-best) provision probability should converge to one. In such circumstances, our Theorem \ref{thm:tightness} implies that no incentive compatible, individually rational and eventually ex-ante budget balanced mechanism can be asymptotically efficient when $c_n$ grows faster than the $\sqrt{n}$-rate.
\end{remark}

Theorem \ref{thm:tightness} is an enhancement of the results in \cite{mailath1990asymmetric} in two ways. First, in that paper, the authors show that the provision probability converges to zero when the cost grows proportionally with the number of agents. Here, we show that any rate of $c_n$ faster than $\sqrt{n}$ gives the same negative result. Second, \cite{mailath1990asymmetric} illustrate that the provision probability converges at the $n^{-1/4}$-rate when the cost grows proportionally with the number of agents. This convergence rate can be refined to the faster $n^{-1/2}$-rate using our result in Theorem \ref{thm:tightness}. 


The $\sqrt{n}$-rate of the revenue given by Theorem \ref{thm:tightness} is known in the literature \citep{hellwig2003public,kleinberg2013ratio}.\footnote{As we clarify in Appendix \ref{sec:comparison-hellwig}, while the result in \cite{hellwig2003public} are correct, the proof is slightly flawed. On the other hand, \cite{kleinberg2013ratio} only provide a heuristic (but informal) argument with the uniform distribution to explain why the optimal revenue grows at the $\sqrt{n}$-rate.}
This tightness result, together with the positive result in Theorem \ref{thm:inid_irreg_welfare},  characterizes the role of the provision cost in the public goods problem. In particular, we identify $\sqrt{n}$ as the ``boundary case'' of the growth rate of cost. When the cost grows faster than the $\sqrt{n}$-rate, we can recover the negative result by \cite{mailath1990asymmetric} that no incentive compatible, individually rational, and eventually ex-ante budget balanced mechanism can be asymptotically efficient. When the cost grows slower than the $\sqrt{n}$-rate, we can recover the positive result by \cite{hellwig2003public} that budget balance and efficiency can be achieved asymptotically (and in an informationally robust way).\footnote{The nature of the assumption on the growth rate of $c_n$ is essentially about the production technology and how the marginal cost decreases with the number of agents. See \citet{roberts1976incentives} for a discussion on the relationship between the cost of a public good and the number of consumers.}

\section{Concluding Remarks} \label{sec:conlusion}

The main objective of this paper is to address the problem of mechanism design for public goods provision when the number of agents grows to infinity. The public goods problem is a classic textbook problem, yet the asymptotic and statistical approach we take is novel. The technical contribution of the paper is in introducing the central limit estimate and the Berry-Esseen theorem as helpful tools in calculating quantities such as ex-ante budget and revenue, which was previously a formidable task. We show that what the mechanism designer can achieve depends crucially on how the cost of providing the public good grows with the number of agents. When the cost increases faster than the $\sqrt{n}$-rate, the mechanism designer cannot implement the efficient decision, even asymptotically, with a balanced budget. When the cost grows slower than the $\sqrt{n}$-rate, we advocate the mechanism designer to use AMT mechanisms. The AMT mechanisms can achieve budget balance and efficiency asymptotically and have the advantage of being simple and informationally robust.

\begin{appendix}
    \section{Technical Proofs}\label{sec:proofs}
    Appendix \ref{sec:lemmas} presents general convergence results dervied based on the Berry-Esseen theorem. Appendix \ref{sec:proof-thm} presents the proofs for theorems in the main text.

    \subsection{Preliminary Results}\label{sec:lemmas}
    This section provides some general results as useful lemmas. We denote $\phi$ and $\Phi$ respectively as the pdf and cdf of the standard normal distribution. We use $C$ to denote a generic constant that does not depend on $n$, which may have different values at each appearance.
    
    \begin{lemma} \label{lm:truncated-mean-as-tail-prob}
        Let $X$ and $Y$ be two random variables with $\mathbb{E}\abs{X} < \infty$, then for any $\alpha \in \mathbb{R}$,
        \begin{align*}
            \mathbb{E} \left[ X \mathbf{1}_{[\alpha,\infty)}(Y) \right] = \int_{0}^\infty \mathbb{P}\left( X > x,Y \geq \alpha \right) dx - \int_{-\infty}^0 \mathbb{P}\left( X < x,Y \geq \alpha \right) dx.
        \end{align*}
        In particular, if $X=Y$, then
        \begin{align*}
            \mathbb{E} \left[ X \mathbf{1}_{[\alpha,\infty)}(X) \right] = \int_{0}^\infty \mathbb{P}\left( X >x \right) dx - \int_{\alpha}^0 \mathbb{P}\left( X < x\right) dx, \text{ for } \alpha \leq 0,
        \end{align*}
        and
        \begin{align*}
            \mathbb{E} \left[ X \mathbf{1}_{[\alpha,\infty)}(X) \right] = \int_{\alpha}^\infty \mathbb{P}\left( X >x \right) dx, \text{ for } \alpha > 0.
        \end{align*}
    \end{lemma}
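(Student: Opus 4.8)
The plan is to use a ``layer-cake'' representation of $X$ followed by Tonelli's theorem. The starting point is the elementary pointwise identity, valid for every $s\in\mathbb{R}$,
\[
s = \int_0^\infty \mathbf{1}\{s > x\}\,dx - \int_{-\infty}^0 \mathbf{1}\{s < x\}\,dx,
\]
which one verifies by splitting into the cases $s\ge 0$ (the second integral is $0$ and the first equals $s$) and $s<0$ (the first integral is $0$ and the second equals $-s$). Evaluating this at $s=X(\omega)$ and multiplying through by the $\{0,1\}$-valued factor $\mathbf{1}_{[\alpha,\infty)}(Y(\omega))$ gives, almost surely,
\[
X\,\mathbf{1}_{[\alpha,\infty)}(Y) = \int_0^\infty \mathbf{1}\{X > x,\ Y \ge \alpha\}\,dx - \int_{-\infty}^0 \mathbf{1}\{X < x,\ Y \ge \alpha\}\,dx.
\]

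Next I would take expectations of both sides and interchange $\mathbb{E}$ with each of the two integrals. Both integrands are nonnegative, so Tonelli's theorem applies directly and turns $\mathbb{E}\int_0^\infty \mathbf{1}\{X>x,\,Y\ge\alpha\}\,dx$ into $\int_0^\infty \mathbb{P}(X>x,\,Y\ge\alpha)\,dx$, and similarly for the other term. The one point that needs to be checked is that the resulting subtraction is well defined: the first integral equals $\mathbb{E}[X^+\mathbf{1}_{[\alpha,\infty)}(Y)]\le\mathbb{E}[X^+]$ and the second equals $\mathbb{E}[X^-\mathbf{1}_{[\alpha,\infty)}(Y)]\le\mathbb{E}[X^-]$, both finite since $\mathbb{E}\abs{X}<\infty$; hence there is no $\infty-\infty$ and the difference is exactly $\mathbb{E}[X\mathbf{1}_{[\alpha,\infty)}(Y)]$. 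This establishes the general identity.

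For the specialization $X=Y$ I would substitute and simplify the two joint events, using that the range of integration together with the sign of $\alpha$ pins things down. If $\alpha\le 0$: for $x\ge 0$ the event $\{X>x\}$ already forces $X>x\ge 0\ge\alpha$, so the first integrand collapses to $\mathbb{P}(X>x)$; for $x\le 0$ the event $\{X<x\}\cap\{X\ge\alpha\}$ equals $\{\alpha\le X<x\}$, which is empty unless $x>\alpha$, so the negative-$x$ integral runs only over $(\alpha,0]$. If $\alpha>0$: for $x\le 0$ the events $\{X<x\}$ and $\{X\ge\alpha\}$ are disjoint, so the negative-$x$ integral vanishes, while the positive-$x$ integral splits at $x=\alpha$. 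Collecting terms yields the two displayed special-case formulas. I do not expect a genuine obstacle here: the only care needed is the Tonelli/finiteness bookkeeping in the second step and the behavior at the endpoint $x=\alpha$ when reducing the joint events in the last step.
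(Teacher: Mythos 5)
Your proof of the general identity is correct and is essentially the paper's own argument: the paper writes $X=X^+-X^-$, uses the layer-cake representation $X^+=\int_0^\infty \mathbf{1}_{[0,X^+]}(x)\,dx$ and Fubini on each nonnegative piece, and subtracts; your pointwise identity $s=\int_0^\infty\mathbf{1}\{s>x\}\,dx-\int_{-\infty}^0\mathbf{1}\{s<x\}\,dx$ followed by Tonelli is the same computation, with the finiteness check playing the role of the paper's $\mathbb{E}[X]=\mathbb{E}[X^+]-\mathbb{E}[X^-]$ step.

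The special cases are where your write-up has a genuine problem, and it is worth being precise because the paper's proof does not derive them at all (it only proves the general identity). Your own intermediate reductions are right: for $\alpha\le 0$ the negative-$x$ term is $\int_\alpha^0\mathbb{P}(\alpha\le X<x)\,dx$, and for $\alpha>0$ the positive-$x$ integral splits as $\alpha\,\mathbb{P}(X\ge\alpha)+\int_\alpha^\infty\mathbb{P}(X>x)\,dx$. But ``collecting terms'' does \emph{not} yield the displayed formulas: they differ from your reductions by $\abs{\alpha}\,\mathbb{P}(X<\alpha)$ in the first case and by $\alpha\,\mathbb{P}(X\ge\alpha)$ in the second, and these terms do not vanish in general (take $X\equiv 2$ and $\alpha=1$: the left-hand side is $2$ while $\int_\alpha^\infty\mathbb{P}(X>x)\,dx=1$). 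So your final sentence asserts an equality that your own steps contradict; the displayed special cases are exact only when those boundary terms are zero (e.g.\ $\alpha\,\mathbb{P}(X<\alpha)=0$, respectively $\alpha\,\mathbb{P}(X\ge\alpha)=0$). The fix is simply to state the reductions you actually derived, i.e.\ $\mathbb{E}[X\mathbf{1}_{[\alpha,\infty)}(X)]=\int_0^\infty\mathbb{P}(X>x)\,dx-\int_\alpha^0\mathbb{P}(\alpha\le X<x)\,dx$ for $\alpha\le0$ and $\mathbb{E}[X\mathbf{1}_{[\alpha,\infty)}(X)]=\alpha\,\mathbb{P}(X\ge\alpha)+\int_\alpha^\infty\mathbb{P}(X>x)\,dx$ for $\alpha>0$, rather than the versions printed in the lemma.
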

    
    \begin{proof} [Proof of Lemma \ref{lm:truncated-mean-as-tail-prob}]
        Define $X^+ = \max\{X,0\}$ and $X^- = \max\{-X,0\}$. For $X^+$, notice that
        \begin{align*}
            X^+ = \int_0^\infty \mathbf{1}_{[0,X^+]}(x) dx.
        \end{align*}
        Using Fubini theorem, we have
        \begin{align*}
            \mathbb{E} \left[ X^+ \mathbf{1}_{[\alpha,\infty)}(Y) \right] & = \mathbb{E} \left[ \int_0^\infty \mathbf{1}_{[0,X^+]}(x) \mathbf{1}_{[\alpha,\infty)}(Y) dx  \right] \\
            & = \int_0^\infty \mathbb{E} \left[ \mathbf{1}_{[0,X^+]}(x) \mathbf{1}_{[\alpha,\infty)}(Y)  \right]dx \\
            & = \int_0^\infty \mathbb{P} \left( X^+ > x, Y \geq \alpha  \right)dx \\
            & = \int_0^\infty \mathbb{P} \left( X > x, Y \geq \alpha  \right)dx.
        \end{align*}
        Similarly, for $X^-$, we have
        \begin{align*}
            \mathbb{E} \left[ X^- \mathbf{1}_{[\alpha,\infty)}(Y) \right] 
            & = \int_0^\infty \mathbb{P} \left( X^- > x, Y \geq \alpha  \right)dx \\
            & = \int_{-\infty}^0 \mathbb{P} \left( X < x, Y \geq \alpha  \right)dx.
        \end{align*}
        The result then follows from the fact that $\mathbb{E}[X] = \mathbb{E}[X^+] - \mathbb{E}[X^-]$.
    \end{proof}
    
    Let $(X_i,Y_i), 1 \leq i \leq n$ be an independent sequence of random vectors in $\mathbb{R}^2$ with zero mean and $\mathbb{E}\abs{X_i}^3,\mathbb{E}\abs{Y_i}^3 < \infty$. We introduce the following set of notations for their marginal moments: $\sigma^2_{X} = \mathbb{E}X_i^2,  \sigma_{Y} = \mathbb{E}Y_i^2,\rho_{X} = \mathbb{E}\abs{X_i}^3,  \rho_{Y} = \mathbb{E}\abs{Y_i}^3,$
    and correlation: $\sigma_{XY} = \mathbb{E}[X_iY_i].$
    All these moments are finite. Define $S^X_n = \sum X_i$ and $S^Y_n = \sum Y_i$. Let $(Z^X,Z^Y)$ be a joint normal random vector with zero mean and the same covariance structure as $(S^X_n,S^Y_n)$. We use $\norm{\cdot}$ to denote both the induced 2-norm of matrices and the Euclidean norm of vectors.
    
    We want to bound the difference between the distributions of $(S^X_n,S^Y_n)$ and $(Z^X,Z^Y)$ using the Berry-Esseen theorem. The following lemma is the univariate Berry-Esseen theorem.
    
    \begin{lemma}\label{lm:berry-esseen-uni}
        There is a constant $C > 0$ such that
        \begin{align*}
            \sup_{x \in \mathbb{R}} \abs{ \mathbb{P}\left( S_n^X \leq x \right) - \Phi\left( \frac{x}{\sqrt{n}
            {\sigma}_{X}} \right)  } \leq \frac{C}{\sqrt{n}} .
        \end{align*}
    \end{lemma}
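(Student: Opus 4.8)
This is the classical Berry--Esseen bound for a row-independent array, so the plan is to reproduce the standard characteristic-function argument. First I would pass to the standardized statement: writing $B_n^{2} \equiv \sum \sigma_{X,i}^{2} = n\bar{\sigma}_{X,n}^{2}$ and $L_n \equiv \sum \rho_{X,i}/B_n^{3} = \bar{\rho}_{X,n}/(\sqrt{n}\,\bar{\sigma}_{X,n}^{3})$, it suffices to prove $\sup_{x}\abs{\mathbb{P}(S_n^{X}/B_n \le x) - \Phi(x)} \le C_1 L_n$. Since the left-hand side never exceeds $1$, we may assume $L_n$ is smaller than any fixed constant of our choosing (otherwise the inequality holds trivially for $C_1$ large enough), and by Lyapunov's inequality $\rho_{X,i}\ge\sigma_{X,i}^{3}$ one also has $L_n \ge n^{-1/2}$, which bounds the range of the smoothing parameter below.

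Second, I would invoke Esseen's smoothing inequality: for any $T>0$,
\begin{align*}
	\sup_{x}\abs{F(x)-\Phi(x)} \le \frac{1}{\pi}\int_{-T}^{T}\left|\frac{\widehat{F}(t)-e^{-t^{2}/2}}{t}\right|\,dt + \frac{24\,\phi(0)}{\pi T},
\end{align*}
applied with $F$ the distribution function of $S_n^{X}/B_n$, whose characteristic function factorizes as $\widehat{F}(t) = \prod_i \widehat{f}_i(t/B_n)$, where $\widehat{f}_i$ is the characteristic function of $X_i$. Choosing $T = c/L_n$ for a small absolute constant $c>0$ makes the second term $O(L_n)$ immediately, so everything reduces to estimating $\abs{\widehat{F}(t)-e^{-t^{2}/2}}$ for $\abs{t}\le T$.

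Third, for the core estimate I would Taylor-expand each factor. Because $\mathbb{E}X_i=0$ and $\abs{e^{iu}-1-iu+u^{2}/2}\le \abs{u}^{3}/6$, one has $\widehat{f}_i(s) = 1 - \tfrac12\sigma_{X,i}^{2}s^{2} + r_i(s)$ with $\abs{r_i(s)}\le \tfrac16\abs{s}^{3}\rho_{X,i}$; combined with $\abs{e^{-x}-1+x}\le x^{2}/2$ for $x\ge0$ this gives a per-factor comparison of $\widehat{f}_i(s)$ with $e^{-\sigma_{X,i}^{2}s^{2}/2}$. The product estimate then follows from $\abs{\prod a_i - \prod b_i}\le\sum\abs{a_i-b_i}$ (valid since $\abs{\widehat{f}_i}\le1$ and $e^{-\sigma_{X,i}^{2}s^{2}/2}\le1$), used together with the normalization $\sum\sigma_{X,i}^{2}/B_n^{2}=1$ and the Lyapunov bounds $\sigma_{X,i}^{2}/B_n^{2}\le(\rho_{X,i}/B_n^{3})^{2/3}$. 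Choosing $c$ small enough that the quadratic part dominates on the whole range $\abs{t}\le c/L_n$ produces a bound of the shape $\abs{\widehat{F}(t)-e^{-t^{2}/2}} \le C'\,L_n\abs{t}^{3}e^{-t^{2}/4}$; dividing by $\abs{t}$ and integrating over $\mathbb{R}$ yields the remaining $O(L_n)$ contribution, and assembling the pieces gives the inequality with an explicit universal $C_1$.

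The step I expect to be the main obstacle is precisely this uniform characteristic-function estimate: the smoothing lemma forces control of $\prod_i\widehat{f}_i(t/B_n)$ out to $\abs{t}\sim 1/L_n$, well beyond the comfortable range $\abs{t}\lesssim 1/L_n^{1/3}$ of the naive ``logarithm of the product'' expansion, so one must combine the crude modulus bound $\abs{\widehat{f}_i}\le1$ with the refined expansion and track the $\sum_i\sigma_{X,i}^{4}s^{4}$ and $\max_i\sigma_{X,i}^{2}/B_n^{2}$ terms (all ultimately dominated by powers of $L_n$) carefully enough that the final constant is genuinely universal. Everything else---Esseen's lemma, the elementary scalar inequalities, and the Gaussian integral---is routine.
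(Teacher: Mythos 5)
A preliminary remark on the comparison you asked for: the paper does not prove this lemma at all. It is the classical Berry--Esseen (Esseen) inequality for sums of independent, not necessarily identically distributed random variables, and the paper simply invokes it as a known result, exactly as it imports the multivariate version from Bentkus (2005) in the next lemma. So the intended "proof" is a citation (Esseen; Petrov, \emph{Limit Theorems of Probability Theory}, Ch.~V), and your from-scratch characteristic-function argument is by construction a different, much longer route --- albeit the standard textbook one: standardization, Esseen's smoothing inequality with cutoff $T\asymp 1/L_n$, and a comparison of $\widehat{F}(t)=\prod_i\widehat{f}_i(t/B_n)$ with $e^{-t^2/2}$.

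Within that outline there is, however, a genuine gap at the central step, and you have half-acknowledged it yourself. The product inequality $\abs{\prod_i a_i-\prod_i b_i}\le\sum_i\abs{a_i-b_i}$, justified only by $\abs{\widehat{f}_i}\le 1$ and $e^{-\sigma_{X,i}^2s^2/2}\le 1$, cannot deliver your claimed estimate $\abs{\widehat{F}(t)-e^{-t^2/2}}\le C'L_n\abs{t}^3e^{-t^2/4}$: it yields only something like $\sum_i\bigl(\rho_{X,i}\abs{s}^3/6+\sigma_{X,i}^4s^4/8\bigr)\le L_n\abs{t}^3/6+\bigl(\max_i\sigma_{X,i}^2/B_n^2\bigr)t^4/8$ with \emph{no} Gaussian decay, and after dividing by $\abs{t}$ and integrating up to $T\asymp 1/L_n$ the first term alone contributes order $L_n^{-2}$, not $O(L_n)$. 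To recover the exponential factor you must retain per-factor decay, e.g.\ via $\abs{\prod_i a_i-\prod_i b_i}\le\sum_i\abs{a_i-b_i}\prod_{j\ne i}\max(\abs{a_j},\abs{b_j})$ combined with a bound of the type $\abs{\widehat{f}_j(t/B_n)}\le\exp\bigl(-\sigma_{X,j}^2t^2/(2B_n^2)+\rho_{X,j}\abs{t}^3/(6B_n^3)\bigr)$; and establishing that bound for \emph{every} $j$ on the whole range $\abs{t}\le 1/(4L_n)$ in the non-identically-distributed setting requires a case analysis (the sub-case $\sigma_{X,j}^2t^2/B_n^2>2$ does not follow from the naive Taylor comparison and is usually handled by symmetrization or by the Lyapunov ratio $\sigma_{X,j}^3\le\rho_{X,j}$ forcing $L_n$ to be large in that regime). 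This is precisely the content of the classical lemma $\abs{\widehat{F}(t)-e^{-t^2/2}}\le 16L_n\abs{t}^3e^{-t^2/3}$ for $\abs{t}\le 1/(4L_n)$ in Petrov; as written, your sketch presupposes that lemma rather than proves it. The remaining pieces (reduction, smoothing inequality, choice of $T$, Gaussian integral) are fine, but given that the paper treats the statement as citable, the economical fix is to cite it rather than to complete this argument.
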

    
    Then we prove the multivariate case.
    
    \begin{lemma} \label{lm:berry-esseen-multi}
        Assume $X$ and $Y$ are not perfectly correlated. The following bound holds between the joint distributions of $(S^X_n,S^Y_n)$ and $(Z^X,Z^Y)$: let $\mathcal{B}$ be the set of all measurable convex sets in $\mathbb{R}^2$, then there is a constant $C>0$ such that
        \begin{align*}
            & \sup_{B \in \mathcal{B}} \abs{\mathbb{P}\left( (S^X_n,S^Y_n) \in B \right)  - \mathbb{P}\left((Z^X,Z^Y) \in B\right)} \leq \frac{C}{\sqrt{n}}.
        \end{align*}
    \end{lemma}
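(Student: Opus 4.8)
The plan is to reduce the statement to a standard form of the multivariate Berry-Esseen theorem for sums of independent (not identically distributed) random vectors, as found in the reference \cite{bentkus2005lyapunov}. The typical formulation of that result states: if $\xi_1,\dots,\xi_n$ are independent mean-zero random vectors in $\mathbb{R}^d$ with $\operatorname{Cov}(\sum \xi_i) = I_d$, and $Z$ is standard $d$-dimensional Gaussian, then $\sup_{B \in \mathcal{B}}|\mathbb{P}(\sum \xi_i \in B) - \mathbb{P}(Z \in B)| \le C d^{1/4} \sum_i \mathbb{E}\|\xi_i\|^3$, with $C$ universal and $d=2$ here, so $d^{1/4}$ is just an absolute constant absorbed into $C_2$. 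The work is therefore in the standardization: passing from $(X_i,Y_i)$ with the given covariance structure to vectors with identity covariance, tracking how the third-moment sum transforms under the linear map, and then translating the resulting bound back through the same (invertible, hence convexity-preserving) linear map.

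First I would set $\Sigma_n = \operatorname{Cov}\big(\tfrac{1}{\sqrt n}(S_n^X, S_n^Y)\big)$, i.e. the matrix with entries $\bar\sigma^2_{X,n}$, $\bar\sigma^2_{Y,n}$ on the diagonal and $\bar\sigma_{XY,n}$ off-diagonal; its determinant $\bar\sigma^2_{X,n}\bar\sigma^2_{Y,n} - \bar\sigma_{XY,n}^2$ is nonzero by hypothesis, so $\Sigma_n$ is positive definite and $\Sigma_n^{-1/2}$ exists. Define the standardized summands $\xi_i = \tfrac{1}{\sqrt n}\Sigma_n^{-1/2}(X_i,Y_i)^\top$; then $\operatorname{Cov}(\sum_i \xi_i) = I_2$, and since the map $v \mapsto \Sigma_n^{-1/2} v / \sqrt n$ is a bijection of $\mathbb{R}^2$ carrying convex sets to convex sets, $\sup_{B}|\mathbb{P}((S_n^X,S_n^Y)\in B) - \mathbb{P}((Z^X,Z^Y)\in B)|$ equals $\sup_{B'}|\mathbb{P}(\sum\xi_i \in B') - \mathbb{P}(Z\in B')|$ with $Z$ standard normal. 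Applying the cited multivariate Berry-Esseen bound, this is at most a universal constant times $\sum_i \mathbb{E}\|\xi_i\|^3 = n^{-3/2}\sum_i \mathbb{E}\|\Sigma_n^{-1/2}(X_i,Y_i)^\top\|^3$.

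Next I would bound the operator norm $\|\Sigma_n^{-1/2}\| = (\lambda_{\min}(\Sigma_n))^{-1/2}$. Since $\Sigma_n$ is $2\times2$ positive definite, $\lambda_{\min}(\Sigma_n) \ge \det(\Sigma_n)/\operatorname{tr}(\Sigma_n) = (\bar\sigma^2_{X,n}\bar\sigma^2_{Y,n} - \bar\sigma_{XY,n}^2)/(\bar\sigma^2_{X,n} + \bar\sigma^2_{Y,n})$, hence $\|\Sigma_n^{-1/2}\|^3 \le \big((\bar\sigma^2_{X,n}+\bar\sigma^2_{Y,n})/(\bar\sigma^2_{X,n}\bar\sigma^2_{Y,n}-\bar\sigma_{XY,n}^2)\big)^{3/2} = \bar\eta_{XY,n}$. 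Then $\mathbb{E}\|\Sigma_n^{-1/2}(X_i,Y_i)^\top\|^3 \le \bar\eta_{XY,n}\,\mathbb{E}\|(X_i,Y_i)\|^3$, and using $\|(x,y)\|^3 = (x^2+y^2)^{3/2} \le 2^{1/2}(|x|^3 + |y|^3)$ (equivalently the elementary inequality $(a+b)^{3/2}\le \sqrt2(a^{3/2}+b^{3/2})$ for $a,b\ge0$) one gets $\mathbb{E}\|(X_i,Y_i)\|^3 \le \sqrt2(\rho_{X,i} + \rho_{Y,i})$. Summing over $i$ and dividing by $n^{3/2}$ yields the bound $\tfrac{C_2}{\sqrt n}\bar\eta_{XY,n}(\bar\rho_{X,n} + \bar\rho_{Y,n})$ after absorbing the $\sqrt2$ and the $d^{1/4}$ factors into the universal constant $C_2$.

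The main obstacle is simply invoking the right off-the-shelf multivariate Berry-Esseen theorem with the convex-set discrepancy (as opposed to Kolmogorov-type or smooth-function metrics) in the non-identically-distributed, Lyapunov-ratio form — the result of \cite{bentkus2005lyapunov} is exactly of this type, so once it is quoted correctly the rest is the bookkeeping of the linear standardization and the two elementary inequalities above. A minor point to handle cleanly is that the inequality $\lambda_{\min}(\Sigma_n) \ge \det\Sigma_n/\operatorname{tr}\Sigma_n$ is special to the $2\times2$ case (it follows from $\lambda_{\min}\lambda_{\max} = \det$ and $\lambda_{\max}\le \operatorname{tr}$), which is all that is needed here.
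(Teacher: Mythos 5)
Your proposal is correct and follows essentially the same route as the paper's proof: invoke the Bentkus (2005) multivariate Berry--Esseen bound after standardizing by $(n\varSigma_{XY,n})^{-1/2}$, bound $\norm{\varSigma_{XY,n}^{-1/2}}^3 = \lambda_{\min}^{-3/2} \le \bar{\eta}_{XY,n}$ via the smallest eigenvalue of the $2\times 2$ covariance matrix, and bound $\mathbb{E}\norm{(X_i,Y_i)'}^3 \le \sqrt{2}(\rho_{X,i}+\rho_{Y,i})$. Your derivation of $\lambda_{\min} \ge \det\varSigma_{XY,n}/\operatorname{tr}\varSigma_{XY,n}$ from $\lambda_{\min}\lambda_{\max}=\det$ and $\lambda_{\max}\le\operatorname{tr}$ is a marginally cleaner way to reach the same inequality the paper obtains via the explicit eigenvalue formula and the mean value theorem.
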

    
    \begin{proof} [Proof of Lemma \ref{lm:berry-esseen-multi}]
        We use $\varSigma_{XY}$ to denote the normalized (by $n$) covariance matrix of $(S^X_n,S^Y_n)$, 
        \[
        \varSigma_{XY}	=
        \begin{pmatrix*}
            {\sigma}^2_{X} & {\sigma}_{XY} \\
            {\sigma}_{XY}  & {\sigma}^2_{Y}
        \end{pmatrix*}.
        \]
        The multivariate Berry-Esseen theorem \citep{bentkus2005lyapunov} says there exists a universal constant $C$, such that
        \begin{align*}
            & \sup_{B \in \mathcal{B}} \abs{\mathbb{P}\left( (S^X_n,S^Y_n) \in B \right)  - \mathbb{P}\left((Z^X,Z^Y) \in B\right)} \\
             & \leq  C \sum \mathbb{E} \left[ \norm{(n\varSigma_{XY})^{-1/2} (X_i,Y_i)'}^3 \right] \\
            & \leq  \frac{C}{\sqrt{n}} \left( \frac{1}{n} \sum \mathbb{E} \left[ \norm{\varSigma_{XY}^{-1/2}}^3 \norm{(X_i,Y_i)'}^3 \right] \right) \\
            & = \frac{C}{\sqrt{n}} \norm{\varSigma_{XY}^{-1/2}}^3 \mathbb{E} \left[  \norm{(X_i,Y_i)'}^3 \right].
        \end{align*}
        The induced 2-norm of a positive semi-definite matrix equals to its largest eigenvalue. So we have
        $$ \norm{\varSigma_{XY}^{-1/2}}^3 = \lambda_{\text{min}}^{-3/2}, $$
        where $\lambda_{\text{min}}$ is the smallest eigenvalue of $\varSigma_{XY}$. We next compute $\lambda_{\text{min}}$. The characteristic function of $\varSigma_{XY}$ is
        \begin{align*}
            \text{det} 
            \begin{pmatrix*}
                \bar{\sigma}^2_{X} - \lambda & {\sigma}_{XY} \\
                {\sigma}_{XY}  &  {\sigma}^2_{Y} - \lambda
            \end{pmatrix*}
            = \lambda^2 - ({\sigma}^2_{X} + {\sigma}^2_{Y}) \lambda + {\sigma}^2_{X}{\sigma}^2_{Y} - {\sigma}_{XY}^2.
        \end{align*}
        The smaller eigenvalue is
        \begin{align*}
            \lambda_{\text{min}} = & \frac{1}{2} \left( {\sigma}^2_{X} + {\sigma}^2_{Y} - \sqrt{({\sigma}^2_{X} + {\sigma}^2_{Y})^2 - 4({\sigma}^2_{X}{\sigma}^2_{Y} - {\sigma}_{XY}^2)} \right) \\
            = & \frac{1}{2} \left( {\sigma}^2_{X} + {\sigma}^2_{Y} - \sqrt{({\sigma}^2_{X} - {\sigma}^2_{Y})^2 - 4{\sigma}_{XY}^2} \right),
        \end{align*}
        which is a positive constant when $\sigma^2_X\sigma^2_Y - \sigma_{XY}^2 \neq 0$.

        To deal with the remaining part of the bound, we employ the Minkowski inequality.
        \begin{align*}
            \mathbb{E} \left[  \norm{(X_i,Y_i)'}^3 \right] & = \mathbb{E} \left[  (X_i^2 + Y_i^2)^{3/2} \right] \\
            & \leq \left( \left( \mathbb{E}\left[ (X_i^2)^{3/2} \right] \right)^{2/3} + \left( \mathbb{E}\left[ (Y_i^2)^{3/2} \right] \right)^{2/3} \right)^{3/2} \\
            & = \left( \left( \mathbb{E} \abs{X_i}^{3}\right)^{2/3} + \left( \mathbb{E}\abs{Y_i}^{3}  \right)^{2/3} \right)^{3/2} \\
            & = \left( \rho_{X}^{2/3} + \rho_{Y}^{2/3} \right)^{3/2} \\
            & \leq \sqrt{2}(\rho_{X} + \rho_{Y}).
        \end{align*}
        The last inequality follows from the fact that the function $x \mapsto x^{2/3}$ is concave so that for any two positive numbers $a$ and $b$, it holds that
        \begin{align*}
            \frac{a^{2/3}+b^{2/3}}{2} \leq \left( \frac{a+b}{2} \right)^{2/3}  \implies \left( a^{2/3}+b^{2/3} \right)^{3/2} \leq \sqrt{2}(a+b). 
        \end{align*}
    \end{proof}
    
    \begin{lemma} \label{lm:multinormal-truncated-mean}
        The following expression of the normal truncated mean holds true:
        \begin{align*}
            \mathbb{E}\left[ Z^X \mathbf{1}\{ Z^Y \geq \alpha_n \} \right] = \sqrt{n} \frac{{\sigma}_{XY}}{{\sigma}_{Y}} \phi\left( \frac{\alpha_n}{\sqrt{n} {\sigma}_{Y}} \right).
        \end{align*}
        In particular, for $X = Y$, we have
        \begin{align*}
            \mathbb{E}\left[ Z^X \mathbf{1}\{ Z^X \geq \alpha_n \} \right] = \sqrt{n} {\sigma}_{X} \phi\left( \frac{\alpha_n}{\sqrt{n} {\sigma}_{X}} \right).
        \end{align*}
    \end{lemma}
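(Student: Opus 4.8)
The plan is to reduce the bivariate truncated mean to a one-dimensional calculation by regressing $Z^X$ on $Z^Y$. I assume $\bar\sigma_{Y,n}>0$, which is implicit in the statement since the right-hand side divides by $\bar\sigma_{Y,n}$. Write $Z^X = \beta_n Z^Y + U$, where $\beta_n = \operatorname{Cov}(Z^X,Z^Y)/\operatorname{Var}(Z^Y) = (n\bar\sigma_{XY,n})/(n\bar\sigma^2_{Y,n}) = \bar\sigma_{XY,n}/\bar\sigma^2_{Y,n}$ and $U := Z^X - \beta_n Z^Y$. By the choice of $\beta_n$ we have $\operatorname{Cov}(U,Z^Y)=0$, and since $(Z^X,Z^Y)$ is jointly normal, so is $(U,Z^Y)$; hence $U$ is independent of $Z^Y$, and $\mathbb{E}[U]=0$. (All the relevant expectations are finite because $Z^X$ is Gaussian.)

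Given this decomposition, I would substitute to obtain
\[
\mathbb{E}\big[ Z^X \mathbf{1}\{Z^Y \ge \alpha_n\} \big] = \beta_n\, \mathbb{E}\big[ Z^Y \mathbf{1}\{Z^Y \ge \alpha_n\} \big] + \mathbb{E}[U]\, \mathbb{P}(Z^Y \ge \alpha_n),
\]
where the second term vanishes because $\mathbb{E}[U]=0$ (independence is used to factor it out). It then remains to evaluate the univariate truncated mean of $Z^Y \sim N(0, n\bar\sigma^2_{Y,n})$. Setting $\tau_n = \sqrt n\,\bar\sigma_{Y,n}$ and substituting $Z^Y = \tau_n W$ with $W$ standard normal, the integral $\int_{\alpha_n/\tau_n}^{\infty} w\,\phi(w)\,dw$ equals $\phi(\alpha_n/\tau_n)$ since $\phi'(w) = -w\phi(w)$, so $\mathbb{E}[Z^Y \mathbf{1}\{Z^Y \ge \alpha_n\}] = \tau_n\,\phi(\alpha_n/\tau_n)$. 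Plugging this back in and simplifying $\beta_n\tau_n = \sqrt n\,\bar\sigma_{XY,n}/\bar\sigma_{Y,n}$ yields the claimed identity. The special case $X=Y$ is immediate, since then $\bar\sigma_{XY,n}=\bar\sigma^2_{X,n}$ and the prefactor collapses to $\sqrt n\,\bar\sigma_{X,n}$.

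I expect no real obstacle here; the argument is essentially routine. The only two points deserving a word of care are (i) the inference "zero covariance $\Rightarrow$ independence," which is valid precisely because $(U,Z^Y)$ is jointly Gaussian rather than merely uncorrelated, and (ii) the implicit nondegeneracy $\bar\sigma_{Y,n}>0$. As an alternative route that sidesteps the Gaussian-conditioning step, one could instead apply Lemma \ref{lm:truncated-mean-as-tail-prob} with $X=Z^X$, $Y=Z^Y$ and compute the two resulting bivariate-normal tail integrals directly, but the regression decomposition above is the shorter path.
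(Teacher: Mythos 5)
Your proposal is correct and matches the paper's own argument: the paper also defines $e = Z^X - (\bar{\sigma}_{XY,n}/\bar{\sigma}^2_{Y,n}) Z^Y$, uses joint normality to conclude $e$ is independent of $Z^Y$, and reduces to the univariate Gaussian truncated mean. Your write-up just makes the final univariate computation via $\phi'(w) = -w\phi(w)$ explicit, which the paper leaves implicit.
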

    
    \begin{proof} [Proof of Lemma \ref{lm:multinormal-truncated-mean}]
        Define $e = Z^X - \frac{{\sigma}_{XY}}{{\sigma}^2_{Y}} Z^Y$. It is straightforward to compute that $\mathbb{E}e=0$. Since $e$ and $Z^Y$ are jointly normal and uncorrelated, $e \perp Z^Y$. Therefore,
        \begin{align*}
            \mathbb{E}\left[ Z^X \mathbf{1}{\{ Z^Y \geq \alpha_n \}} \right] & = \mathbb{E}\left[ \left( \frac{{\sigma}_{XY}}{{\sigma}^2_{Y}} Z^Y + e \right)  \mathbf{1}\{ Z^Y \geq \alpha_n \} \right] \\
            & = \frac{{\sigma}_{XY}}{{\sigma}^2_{Y}} \mathbb{E}\left[ Z^Y \mathbf{1}\{ Z^Y \geq \alpha_n \} \right] \\
            & = \sqrt{n} \frac{{\sigma}_{XY}}{{\sigma}_{Y}} \phi\left( \frac{\alpha_n}{\sqrt{n} {\sigma}_{Y}} \right).
        \end{align*}
    \end{proof}
    
    \begin{lemma} \label{lm:bound-mean-uni}
        For any sequence of numbers $\alpha_n$, there is a constant $C>0$ such that
        \begin{align*}
            \abs{\mathbb{E}\left[ S_n^X \mathbf{1}\{ S^X_n \geq \alpha_n \} \right] - \sqrt{n} {\sigma}_{X}\phi\left( \frac{\alpha_n}{\sqrt{n} {\sigma}_{X}} \right)} \leq C n^{1/4} .
        \end{align*}
    \end{lemma}
    
    \begin{proof} [Proof of Lemma \ref{lm:bound-mean-uni}]
        By Lemma \ref{lm:truncated-mean-as-tail-prob} and \ref{lm:multinormal-truncated-mean}, we have
        \begin{align*}
            &\abs{\mathbb{E}\left[ S_n^X \mathbf{1}\{ S^X_n \geq \alpha_n \} \right] - \sqrt{n} {\sigma}_{X}\phi\left( \frac{\alpha_n}{\sqrt{n} {\sigma}_{X}} \right)} \\
            \leq & \int_{0}^\infty \abs{\mathbb{P}\left( S^X_n > x \right) - \mathbb{P}\left( Z^X > x \right)} dx + \int_{-\infty}^0 \abs{\mathbb{P}\left( S^X_n < x\right) - \mathbb{P}\left( Z^X < x\right)} dx.
        \end{align*}
        Using Chebyshev's inequality, we have
        \begin{align*}
            \abs{\mathbb{P}\left( S^X_n > x \right) - \mathbb{P}\left( Z^X > x \right)} \leq \frac{n {\sigma}^2_{X}}{x^2}.
        \end{align*}
        Together with Lemma \ref{lm:berry-esseen-uni}, we have
        \begin{align*}
            \int_{0}^\infty \abs{\mathbb{P}\left( S^X_n > x \right) - \mathbb{P}\left( Z^X > x \right)} dx & \leq \int_{0}^{n^{3/4}} \frac{C}{\sqrt{n}} \frac{{\rho}_{X}}{{\sigma}_{X}^{3}} dx + \int_{n^{3/4}}^\infty \frac{n {\sigma}^2_{X}}{x^2} dx \\
            & = n^{1/4} \left( C \frac{{\rho}_{X}}{{\sigma}_{X}^{3}} + {\sigma}_{X}^2 \right) 
        \end{align*}
    \end{proof}
    
    \begin{lemma} \label{lm:bound-mean-multi}
        Suppose $X$ and $Y$ are not perfectly correlated. For any sequence of numbers $\alpha_n$, there is a constant $C>0$ such that
        \begin{align*}
            \abs{\mathbb{E}\left[ S_n^X \mathbf{1}\{ S^Y_n \geq \alpha_n \} \right] - \sqrt{n} \frac{{\sigma}_{XY}}{{\sigma}_{Y}} \phi\left( \frac{\alpha_n}{\sqrt{n} {\sigma}_{Y}} \right)}\leq C n^{1/4}.
        \end{align*}
    \end{lemma}
    
    \begin{proof} [Proof of Lemma \ref{lm:bound-mean-multi}]
        By Lemma \ref{lm:truncated-mean-as-tail-prob} and \ref{lm:multinormal-truncated-mean}, we know that the absolute difference we want to bound is
        \begin{align*}
            & \abs{\mathbb{E}\left[ S_n^X \mathbf{1}\{ S^Y_n \geq \alpha_n \} \right] - \mathbb{E}\left[ Z^X \mathbf{1}\{ Z^Y \geq \alpha_n \} \right]} \\
            \leq & \underbrace{\abs{ \int_{0}^\infty \mathbb{P}\left( S^X_n > x,S_n^Y \geq \alpha_n \right) - \mathbb{P}\left( Z^X > x,Z^Y \geq \alpha_n \right) dx}}_{I_1} \\
            + & \underbrace{\abs{\int_{-\infty}^0 \mathbb{P}\left( S^X_n < x,S^Y_n \geq \alpha_n \right) - \mathbb{P}\left( Z^X < x,Z^Y \geq \alpha_n \right) dx }}_{I_2}.
        \end{align*}
        We first deal with $I_1$. Notice that 
        \begin{align*}
            \mathbb{P}\left( S^X_n > x,S_n^Y \geq \alpha_n \right) \leq \mathbb{P}\left( S^X_n > x\right) \leq \frac{n{\sigma}^2_{X}}{x^2},
        \end{align*}
        where the second inequality follows from the Chebyshev inequality. Similarly, we have
        \begin{align*}
            \mathbb{P}\left( Z^X > x,Z^Y \geq \alpha_n \right) \leq \mathbb{P}\left( Z^X > x\right) \leq \frac{n{\sigma}^2_{X}}{x^2}.
        \end{align*}
        So putting these two tail bounds together, we have
        \begin{align*}
            \abs{\mathbb{P}\left( S^X_n > x,S_n^Y \geq \alpha_n \right) - \mathbb{P}\left( Z^X > x,Z^Y \geq \alpha_n \right)} \leq \frac{n{\sigma}^2_{X}}{x^2}, \text{ for all } x>0.
        \end{align*}
        Then, based on Lemma \ref{lm:berry-esseen-multi}, we have
        \begin{align*}
            I_1 &\leq \int_{0}^{n^{3/4}} \abs{ \mathbb{P}\left( S^X_n > x,S_n^Y \geq \alpha_n \right) - \mathbb{P}\left( Z^X > x,Z^Y \geq \alpha_n \right)} dx \\
            & + \int_{n^{3/4}}^\infty \abs{ \mathbb{P}\left( S^X_n > x,S_n^Y \geq \alpha_n \right) - \mathbb{P}\left( Z^X > x,Z^Y \geq \alpha_n \right)} dx \\
            & \leq n^{3/4} \frac{C}{\sqrt{n}} + \int_{n^{3/4}}^\infty \frac{n{\sigma}^2_{X}}{x^2}  dx \\
            & = n^{1/4} \left( C + {\sigma}^2_{X} \right).
        \end{align*}
        Following the same steps we can derive a same bound for $I_2$. Then the result follows.
    \end{proof}

    \subsection{Proofs of Results in the Main Text}\label{sec:proof-thm}

    We introduce some notations that are useful in the proofs. The moments of the valuation distribution are denoted by 
    \begin{align*}
        \mu \equiv \mathbb{E}[V_i], \sigma^2 \equiv \mathbb{E}\vert V_i - \mu \vert^2, \rho \equiv \mathbb{E}\vert V_i - \mu \vert^3, \sigma_{\psi}^2 \equiv \mathbb{E}\vert \psi(V_i) \vert^2, \rho_{\psi} \equiv \mathbb{E}\vert \psi(V_i) \vert^3.
    \end{align*}
    They are all finite since we assume that $f$ has a bounded support and is bounded away from zero on the support. 
To keep the asymptotic analysis concise, we use the notations in the following table.\footnote{The first five notations are standard. The expression $d_n = O_\varepsilon(e_n)$ means that $d_n$ is asymptotically bounded above by $e_n$ divided by some (sufficiently small) power of $n$. The expression $d_n = \omega_{\varepsilon}(e_n)$ means that $d_n$ asymptotically dominates $e_n$ divided by any power of $n$.}
\begin{table}[!hbtp]
	\centering
	
	\begin{tabular}{lll}
	\toprule
	 Notation & Definition & Short Explanation \\ \midrule
		$d_n = o(e_n)$ & $d_n/e_n \rightarrow 0 $ & $\lvert d_n \rvert$ dominated by $e_n$ \\
		$d_n = O(e_n)$ & $\limsup \abs{d_n}/e_n < \infty $ & $\lvert d_n \rvert$ bounded above by $e_n$  \\
		$d_n = \omega(e_n)$ & $\lvert d_n \rvert/e_n \rightarrow \infty$ & $\lvert d_n \rvert$ dominates $e_n$ \\
		$d_n = \Omega(e_n)$ & $ \liminf \lvert d_n \rvert/e_n > 0$ & $\lvert d_n \rvert$ bounded below by $e_n$ \\
		$d_n = \Theta(e_n)$ & $ \lvert d_n \rvert = O(e_n),\lvert d_n \rvert = \Omega(e_n)$ & $\lvert d_n \rvert$ bounded below and above by $e_n$ \\
		$d_n = O_\varepsilon(e_n)$ & $ \exists \varepsilon>0, d_n = O(n^{-\varepsilon}e_n)$ & $\lvert d_n \rvert$ nearly bounded above by $e_n$ \\
		$d_n = \omega_{\varepsilon}(e_n)$ & $ \forall \varepsilon>0, d_n = \omega(n^{-\varepsilon}e_n)$ & $\lvert d_n \rvert$ nearly dominates $e_n$ \\
	\bottomrule
	\end{tabular}

\end{table}

    \begin{proof} [Proof of Theorem \ref{thm:hellwig}]
        \begin{enumerate}[label = (\roman*)]
            \item We first obtain an asymptotic lower bound for the ex-ante budget:
            \begin{align*}
                \mathbb{E}[b^{n}(\bm{V})] & = \mathbb{E} \left[ \sum \psi(V_i) \mathbf{1}\big\{ \sum \psi(V_i) \geq \alpha_n \big\} \right] - c_n \mathbb{P}\left( \sum \psi(V_i) \geq \alpha_n \right) \\
                & \geq \mathbb{E} \left[ \sum \psi(V_i) \mathbf{1}\big\{ \sum \psi(V_i) \geq \alpha_n \big\} \right] - c_n \\
                & \geq \sqrt{n} {\sigma}_{\psi} \phi\left( \frac{\alpha_n}{\sqrt{n} {\sigma}_{\psi}} \right) - Cn^{1/4}  - c_n,
            \end{align*}
            where the last inequality follows from Lemma \ref{lm:bound-mean-uni}. Then we show that the first term on the RHS is the leading term. First, by the assumption $\alpha_n = o\bigs( \sqrt{ n \log n }  \bigs)$, we have
            \begin{align*}
                \log\left( \sqrt{n} \phi\left( \frac{\alpha_n}{\sqrt{n} {\sigma}_{\psi}} \right) \Big/ n^{1/4}  \right) & = \frac{1}{4}\log n  -\left( \frac{\alpha_n}{{\sigma}_{\psi}\sqrt{n}} \right)^2 + C  \\
                & = \log n \left( \frac{1}{4} - \left( \frac{\alpha_n}{{\sigma}_{\psi}\sqrt{n \log n}} \right)^2 + o(1)\right) \\
                & = \log n (1/4 + o(1))\rightarrow \infty. 
            \end{align*}
            
            Next, by the assumption that $c_n = O_{\varepsilon}\bigs(\sqrt{n}\bigs)$, there exist $C,\varepsilon>0$ such that $c_n \leq C n^{1/2 - \varepsilon}$ for large $n$. Then $\log c_n / \log n \leq 1/2 - \varepsilon /2$ for large $n$. Therefore,
            \begin{align*}
                & \log\left( \sqrt{n} \phi\left( \frac{\alpha_n}{\sqrt{n} {\sigma}_{\psi}} \right) \Big/ c_n \right) \\
                = & \frac{1}{2} \log n  - \left( \frac{\alpha_n}{{\sigma}_{\psi}\sqrt{n}} \right)^2 - \log c_n  + C \\
                = & \log n \left( \frac{1}{2} - \left( \frac{\alpha_n}{{\sigma}_{\psi}\sqrt{n \log n}} \right)^2 - \frac{\log c_n}{\log n} + o(1) \right) \\
                \geq & \log n (\varepsilon/2 + o(1)) \rightarrow \infty.
            \end{align*}
            The above asymptotic results show that the leading term in the lower bound is $\sqrt{n} {\sigma}_{\psi} \phi\left( \frac{\alpha_n}{\sqrt{n} {\sigma}_{\psi}} \right)$. We then derive an upper bound for the ex-ante budget:
            \begin{align*}
                \mathbb{E}[b^{n}(\bm{V})] & \leq \mathbb{E} \left[ \sum \psi(V_i) \mathbf{1}\big\{ \sum \psi(V_i) \geq \alpha_n \big\} \right] \\
                & \leq \sqrt{n} {\sigma}_{\psi} \phi\left( \frac{\alpha_n}{\sqrt{n} {\sigma}_{\psi}} \right) + Cn^{1/4} ,
            \end{align*}
            where the last inequality again follows from Lemma \ref{lm:bound-mean-uni}. By the previous analysis, the leading term in the upper bound is also $\sqrt{n}{\sigma}_{\psi} \phi\left( \frac{\alpha_n}{\sqrt{n} {\sigma}_{\psi}} \right)$. Therefore, we only need to derive the growth rate of that term. Take any $\varepsilon>0$, we have
            \begin{align*}
                \log \left( \sqrt{n} \phi\left( \frac{\alpha_n}{\sqrt{n} {\sigma}_{\psi}} \right) \Big/ n^{1/2-\varepsilon} \right) & = \varepsilon \log n - \frac{1}{2} \left( \frac{\alpha_n}{{\sigma}_{\psi}\sqrt{n}} \right)^2 +C \\
                & = \log n \left(\varepsilon - \frac{1}{2} \left( \frac{\alpha_n}{{\sigma}_{\psi}\sqrt{n \log n}} \right)^2 + o(1)  \right) \\
                & = \log n \left( \varepsilon + o(1) \right) \rightarrow \infty.  
            \end{align*}

            \item 
        First notice that $\mathbb{E}[w^{n}(\bm{V})] \leq \mathbb{E}[w^*(\bm{V})]$. We break the ex-ante welfare into two parts
        \begin{align*}
            \mathbb{E}[w^{n}(\bm{V})] & = \mathbb{E} \left[ \sum V_i q^{n}(\bm{V}) - \sum t_i^{n}(\bm{V}) \right] \\
            & = \mathbb{E} \left[ \left( \sum V_i  -  c_n \right) q^{n}(\bm{V})\right] - \mathbb{E}[b^{n}(\bm{V})].
        \end{align*}
        We obtain a lower bound for the first term on the RHS:
        \begin{align*}
            & \mathbb{E} \left[ \left( \sum V_i  -  c_n \right) q^{n}(\bm{V})\right] \\
            = & \mathbb{E} \left[ \mathbf{1}\big\{ \sum \psi(V_i) \geq \alpha_n \big\} \sum V_i  \right] - c_n \mathbb{P}\left( \sum \psi(V_i) \geq c_n \right) \\
            \geq & \mathbb{E} \left[ \mathbf{1}\big\{ \sum \psi(V_i) \geq \alpha_n \big\} \right] \mathbb{E}\left[ \sum V_i \right] - c_n \mathbb{P}\left( \sum \psi(V_i) \geq c_n \right)  \\
            = & (n {\mu}- c_n) \mathbb{P}\left( \sum \psi(V_i) \geq \alpha_n \right) \\
            \geq & (n {\mu}- c_n) \left( 1-\Phi\left( \frac{\alpha_n}{\sqrt{n} {\sigma}_{\psi}} \right) - \frac{C}{\sqrt{n}} \right)\\
            \geq & n {\mu}- c_n - n {\mu}\left( \Phi\left( \frac{\alpha_n}{\sqrt{n}{\sigma}_{\psi}} \right) + \frac{C}{\sqrt{n}}\right) ,
        \end{align*}
        where the second inequality follows from the fact that $\psi$ is non-decreasing and the second to last inequality follows from Lemma \ref{lm:berry-esseen-uni}. By the result in part (i), we can bound the ex-ante budget from above by
        \begin{align*}
            \mathbb{E}[b^{n}(\bm{V})] & \leq \sqrt{n}{\sigma}_{\psi} \phi\left( \frac{\alpha_n}{\sqrt{n} {\sigma}_{\psi}} \right) + Cn^{1/4}\\
            & \leq \sqrt{n} {\sigma}_{\psi} / \sqrt{2 \pi} + Cn^{1/4} ,
        \end{align*}
        where the last inequality follows from the fact that $\phi(\cdot) \leq 1/\sqrt{2 \pi}$. The ex-ante efficient welfare is bounded above by $\mathbb{E}[w^*(\bm{V})] \leq \mathbb{E}\left[ \sum V_i \right] = n {\mu}$.
        Combining these results, we get
        \begin{align*}
            \frac{\mathbb{E}[w^*(\bm{V})] - \mathbb{E}[w^{n}(\bm{V})]}{\mathbb{E}[w^*(\bm{V})]} & \leq \frac{c_n}{n\mu} + \Phi\left( \frac{\alpha_n}{\sqrt{n} {\sigma}_{\psi}} \right) + \frac{C}{\sqrt{n}} \rightarrow 0. \\
        \end{align*}
        
        The remaining task is to show that the leading term in the above expression is $\Phi\bigs( \alpha_n / \sqrt{n} {\sigma}_{\psi} \bigs)$. 
        We use the well-known lower bound on the tail of $\Phi$:
        \begin{align*}
            \Phi(v) = 1-\Phi(|v|) \geq  \abs{v}\exp(-v^2/2)/\big(\sqrt{2\pi}(1+v^2)\big), v \leq 0.
        \end{align*} 
        Then
        \begin{align*}
            &\log\left( \sqrt{n} \Phi\left( \frac{\alpha_n}{{\sigma}_{\psi}\sqrt{n}} \right)  \right)\\
            \geq & \frac{1}{2} \log n + \log \left( \frac{\abs{\alpha_n}}{\sqrt{n}}  \right) - \frac{1}{2} \left( \frac{\alpha_n}{{\sigma}_{\psi}\sqrt{n}} \right)^2 - \log\left( 1 + (\alpha_n/{\sigma}_{\psi}\sqrt{n})^2 \right) + C \\
            = & \frac{1}{4} \log n + \log \left( \frac{\abs{\alpha_n}}{\sqrt{n}}  \right)- \frac{1}{2} \left( \frac{\alpha_n}{{\sigma}_{\psi}\sqrt{n}} \right)^2 + \log\left( \frac{n^{1/4}}{1 + (\alpha_n/{\sigma}_{\psi}\sqrt{n})^2} \right)  + C \\
            = & \log n \left( \frac{1}{4} - \frac{1}{2} \left( \frac{\alpha_n}{{\sigma}_{\psi}\sqrt{n \log n}} \right)^2 \right) + \log \left( \frac{\abs{\alpha_n}}{\sqrt{n}}  \right)\\
            + & \log\left( \frac{n^{1/4}/\log n}{1/ \log n + (\alpha_n/{\sigma}_{\psi}\sqrt{n \log n})^2} \right)  + C,
        \end{align*}
        where the last line goes to $\infty$. Therefore, $\Phi\bigs( \alpha_n / \sqrt{n} {\sigma}_{\psi} \bigs)$ is $\omega(1/\sqrt{n})$ and hence the leading term of the welfare ratio.

        \item The result comes directly from the previous two parts.

    \end{enumerate}
    \end{proof}

    \begin{proof} [Proof of Lemma \ref{lm:corr-psi-h}]
        By the definition of $\psi$, we have
        \begin{align*}
            \sigma_{\psi h} & = \mathbb{E} \left[ \psi(V_i) h(V_i) \right] \\
            & = \mathbb{E} \left[ V_i h(V_i) \right] - \int_0^\infty h(v) (1-F(v)) dv.
        \end{align*}
        By the Fubini theorem, the second term on the RHS is
        \begin{align*}
            \int_0^\infty h(v) (1-F(v)) dv & = \int_0^\infty h(v) \mathbb{E}[\mathbf{1}_{[0,V_i)}(v)] dv  \\
            & = \mathbb{E}\left[ \int_0^\infty h(v) \mathbf{1}_{[0,V_i)}(v) dv \right]  \\
            &= \mathbb{E} \left[ \int_0^{V_i} h(v) dv \right] < \infty.
        \end{align*}
        The above quantity is finite since $V$ has a bounded support and $h$ is continuous.
        Then we perform integration by parts to the above Lebesgue-Stieltjes integral:
        \begin{align*}
            \int_0^{V_i} h(v) dv = vh(v)\Big\lvert_{0}^{V_i} - \int_0^{V_i} v dh(v) = V_ih(V_i) - \int_0^{V_i} v  dh(v).
        \end{align*}
        Therefore, 
        \begin{align*}
            \sigma_{\psi h} & = \mathbb{E} \left[ V_ih(V_i) - \int_0^{V_i} h(v) dv \right] = \mathbb{E} \left[\int_0^{V_i} v \text{ } dh(v)\right],
        \end{align*}
        which is positive if $h$ is increasing.
    \end{proof}
    
    For Theorem \ref{thm:inid_irreg_welfare}, we introduce the following notations for the moments of $h(V_i)$:
\begin{align*}
	\mu_h \equiv \mathbb{E}[h(V_i)], \sigma_h^2 \equiv \mathbb{E}\vert h(V_i) - \mu_h \vert^2, \rho_h \equiv \mathbb{E}\vert h(V_i) - \mu_h \vert^3.
\end{align*}
They are finite when $h$ is a continuous function on $[0,\bar{v}]$.
    \begin{proof} [Proof of Theorem \ref{thm:inid_irreg_welfare}]
        If $h(V_i)$ and $\psi(V_i)$ are perfectly correlated (i.e., $h$ and $\psi$ are linearly dependent), then the result follows from Theorem \ref{thm:hellwig}. Therefore, we only need to study the case where $h(V_i)$ and $\psi(V_i)$ are not perfectly correlated.
        \begin{enumerate}[label = (\roman*)]
            \item 
            We first obtain an asymptotic lower bound for the ex-ante budget:
            \begin{align*}
                \mathbb{E}[b^{n}(\bm{V})] & = \mathbb{E} \left[ \sum \psi(V_i) \mathbf{1}\big\{ \sum h(V_i) \geq \alpha_n \big\} \right] - c_n \mathbb{P}\left( \sum h(V_i) \geq \alpha_n \right) \\
                & \geq \mathbb{E} \left[ \sum \psi(V_i) \mathbf{1}\big\{ \sum h(V_i) \geq \alpha_n \big\} \right] - c_n \\
                & \geq \sqrt{n} \frac{{\sigma}_{\psi h}}{{\sigma}_{h}} \phi\left( \frac{\alpha_n}{\sqrt{n} {\sigma}_{h}} \right) - Cn^{1/4} -c_n,
            \end{align*}
            where the last inequality follows from Lemma \ref{lm:bound-mean-multi}. The first term is strictly positive by Lemma \ref{lm:corr-psi-h}. Then following the same steps as in the proof of Theorem \ref{thm:hellwig}(i), we can show that the leading term in the above expression is $\sqrt{n} \frac{{\sigma}_{\psi h}}{{\sigma}_{h}} \phi\left( \frac{\alpha_n}{\sqrt{n} {\sigma}_{h}} \right)$ under the assumptions $\alpha_n = o(\sqrt{n \log n})$ and $c_n = O_\varepsilon(\sqrt{n})$.
            
            We then derive an upper bound for the ex-ante budget:
            \begin{align*}
                \mathbb{E}[b^{n}(\bm{V})] & \leq \mathbb{E} \left[ \sum \psi(V_i) \mathbf{1}\big\{ \sum h(V_i) \geq \alpha_n \big\} \right] \\
                & \leq \sqrt{n} \frac{{\sigma}_{\psi h}}{{\sigma}_{h}} \phi\left( \frac{\alpha_n}{\sqrt{n} {\sigma}_{h}} \right) + Cn^{1/4} ,
            \end{align*}
            where the last inequality again follows from Lemma \ref{lm:bound-mean-multi}. By the previous analysis, the leading term in the upper bound is also $\sqrt{n} \frac{{\sigma}_{\psi h}}{{\sigma}_{h}} \phi\left( \frac{\alpha_n}{\sqrt{n} {\sigma}_{h}} \right)$, and the result follows.

            \item We follow the proof of Theorem \ref{thm:hellwig}(ii). First notice that $\mathbb{E}[w^{n}(\bm{V})] \leq \mathbb{E}[w^*(\bm{V})]$. 
            We break the ex-ante welfare into two parts
        \begin{align*}
            \mathbb{E}[w^{n}(\bm{V})] & = \mathbb{E} \left[ \sum V_i q^{n} - \sum t^{n}_i(\bm{V}) \right] \\
            & = \mathbb{E} \left[ \left( \sum V_i  -  c_n \right) q^{n}(\bm{V})\right] - \mathbb{E}[b^{n}(\bm{V})].
        \end{align*}
        We obtain a lower bound for the first term on the RHS:
        \begin{align*}
            & \mathbb{E} \left[ \left( \sum V_i  -  c_n \right) q^{n}(\bm{V})\right] \\
            = & \mathbb{E} \left[ \mathbf{1}\big\{ \sum h(V_i) \geq \alpha_n \big\} \sum V_i  \right] - c_n \mathbb{P}\left( \sum h(V_i) \geq c_n \right) \\
            \geq & \mathbb{E} \left[ \mathbf{1}\big\{ \sum h(V_i) \geq \alpha_n \big\} \right] \mathbb{E}\left[ \sum V_i \right] - c_n \mathbb{P}\left( \sum h(V_i) \geq c_n \right)  \\
            = & (n {\mu}- c_n) \mathbb{P}\left( \sum h(V_i) \geq \alpha_n \right) \\
            \geq & (n {\mu}- c_n) \left( 1-\Phi\left( \frac{\alpha_n}{\sqrt{n} {\sigma}_{h}} \right) - \frac{C}{\sqrt{n}} \right)\\
            \geq & n {\mu}- c_n - n \mu\left( \Phi\left( \frac{\alpha_n}{\sqrt{n} {\sigma}_{h}} \right) + \frac{C}{\sqrt{n}} \right),
        \end{align*}
        where the second line follows from the fact that $h$ is non-decreasing and the second to last line follows from Lemma \ref{lm:berry-esseen-uni}. Following the proof of part (i), we can upper bound the ex-ante budget by
        \begin{align*}
            \mathbb{E}[b^{n}(\bm{V})] & \leq \sqrt{n} \frac{{\sigma}_{\psi h}}{{\sigma}_{h}} \phi\left( \frac{\alpha_n}{\sqrt{n} {\sigma}_{h}} \right) + C n^{1/4} \\
            & \leq \sqrt{n} \frac{{\sigma}_{\psi h}}{{\sigma}_{h} \sqrt{2 \pi}} + C n^{1/4}.
        \end{align*}
        We use the same upper bound on the efficient ex-ante welfare $\mathbb{E}[w^*(\bm{V})] \leq n {\mu}$ as before. Combining these results together, we get a similar bound as in part (ii) of Theorem \ref{thm:hellwig}:
    
        \begin{align*}
            \frac{\mathbb{E}[w^*(\bm{V})] - \mathbb{E}[w^{n}(\bm{V})]}{\mathbb{E}[w^*(\bm{V})]} & \leq \frac{c_n}{n{\mu}} + \Phi\left( \frac{\alpha_n}{\sqrt{n} {\sigma}_{h}} \right)  + \frac{C}{\sqrt{n}} \rightarrow 0.
        \end{align*}
        Similar as in the proof of Theorem \ref{thm:hellwig}(ii), the leading term in the above expression is $\Phi\left( \alpha_n / \sqrt{n} {\sigma}_{h} \right)$.
        \item 
        The result comes directly from the previous two parts. 
        \end{enumerate}
    \end{proof}

    \begin{proof} [Proof of Theorem \ref{thm:tightness}]
        It is well-known that the total expected payment of an incentive compatible and individually rational mechanism $(q^n,\{t_i^n\})$ is equal to
        \begin{align*}
            \mathbb{E} \left[ \sum t^n_i(\bm{V}) \right]  = \mathbb{E} \left[ \sum \psi(V_i) q^n(\bm{V}) \right] + \mathbb{E} \left[ \sum t^n_i(0,\bm{V}_{-i}) \right].
        \end{align*}
        See, for example, Equation (3.6) in \cite{hellwig2003public}.
        Since the expected payment $\mathbb{E}[t^n_i(0,\bm{V}_{-i})]$ needs to be non-positive for any $i$ by the individually rational condition, the total expected payment is maximized by setting $q^n(\bm{v}) = \mathbf{1}\{ \sum \psi(v_i) \geq 0 \}$ and $t^n_i(0,\bm{v}_{-i})=0$. Such a mechanism is in fact a special case of the mechanism in Example \ref{eg:vv} with the adjustment term $\alpha_n = 0$. From the proof of Theorem \ref{thm:hellwig}, we can see that the total expected payment is $O(\sqrt{n})$. Notice that even though this particular mechanism may violate incentive compatibility when the distribution is not Myerson regular, the asymptotic order $O(\sqrt{n})$ is nonetheless a valid upper bound on the growth rate of the maximum total expected payment.
    
        Then for any sequence of incentive compatible and individually rational mechanisms $(q^n,\{t^n_i\})$ that satisfies eventually ex-ante budget balanced, it must be true that
        \begin{align*}
            0 \leq \mathbb{E} \left[ \sum t_i(\bm{V}) - c_n q(\bm{V}) \right]  \leq O(\sqrt{n}) - c_n \mathbb{P}(q^n(\bm{V}) = 1), \text{ for $n$ large enough}.
        \end{align*}
        This implies that the provision probability of the public good is converging to zero:
        \begin{align*}
            \mathbb{P}(q^n(\bm{V}) = 1) \leq O(\sqrt{n})/c_n \rightarrow 0. 
        \end{align*}
    \end{proof}
    
    \section{Discussion on Proposition 3 in \cite{hellwig2003public}} \label{sec:comparison-hellwig}

    In this section, we discuss the proof method of Proposition 3 in \cite{hellwig2003public}, hereafter H2003. We first translate the result and the proof with the terminology in our paper. 
    Proposition 3 in H2003 shows that the second-best mechanism is asymptotically efficient when the cost of the public good does not grow with $n$. The proof in H2003 essentially tries to show that the AMT mechanism in Example \ref{eg:vv} is ex-ante budget balanced and asymptotically efficient. Therefore, since the second-best mechanism must have a higher welfare by definition, it is also asymptotically efficient. 
    
    To better explain the proof, we link the notations in H2003 to the ones in our paper. The mechanism $Q^{nk}$ defined in (4.6) and (4.7) on p.597 of H2003 corresponds to the AMT mechanism with $h=\varphi$ in our Example \ref{eg:vv}. The scalar $k$ in the mechanism $Q^{nk}$ corresponds to our adjustment term $\alpha_n$.
    The discussion following Inequality (4.8) on p.598 of H2003 shows that when the adjustment term $k$ is fixed (does not vary with $n$), the mechanism $Q^{nk}$ is ex-ante budget balanced. The discussion following Inequality (4.10) on p.598 of H2003 shows that when the adjustment term $k$ decreases to $-\infty$, the mechanism $Q^{nk}$ is asymptotically efficient. 
    
    We argue that this reasoning is incomplete because a discussion of the ex-ante budget when $k$ varies with $n$ is lacking. More specifically, the proof in H2003 requires that Inequality (4.4) on p.597 to hold for any $\varepsilon >0$ and $n$ sufficiently large. In particular, we can take $\varepsilon = 1/n$. Then the $k(\varepsilon)$ on the second line after (4.10) depends explicitly on $n$ and is decreasing as $n$ increases. In this case, the discussion following Inequality (4.8) is no longer sufficient to show that the mechanism $Q^{nk}$ is ex-ante budget balanced. This is because a decreasing $k$ would decrease the budget. The previous argument that $Q^{nk}$ is ex-ante budget balanced when $k$ is fixed and $n \rightarrow \infty$. However, this does not address the budget when $k$ is decreasing. For example, in the extreme case where $k$ decreases so fast that it is equal to $-\infty$, the revenue becomes zero in the limit, leading to a budget deficit.

    This is where the Berry-Esseen theorem becomes useful: we want to characterize the ex-ante budget when $n$ increases and the adjustment term $k$ decreases. This is possible under the Berry-Esseen theorem because it gives the convergence rate of the central limit theorem. For each $n$ and $k$, we know not only that the budget can be approximated by a normal distribution but also how close this approximation is. Therefore, we can derive the rate of $k$ under which the budget becomes balanced eventually. The details are described in Section \ref{ssec:hellwig} and in the proofs in Appendix \ref{sec:proofs}.

    \end{appendix}


\bibliography{references.bib}

\begin{thebibliography}{}

\bibitem[\protect\citeauthoryear{Armstrong}{Armstrong}{1999}]{armstrong1999price}
Armstrong, M. (1999, 01).
\newblock {Price Discrimination by a Many-Product Firm}.
\newblock {\em The Review of Economic Studies\/}~{\em 66\/}(1), 151--168.

\bibitem[\protect\citeauthoryear{Azar and Micali}{Azar and
  Micali}{2013}]{azar2013parametric}
Azar, P.~D. and S.~Micali (2013).
\newblock Parametric digital auctions.
\newblock In {\em Proceedings of the 4th Conference on Innovations in
  Theoretical Computer Science}, ITCS '13, New York, NY, USA, pp.\  231–232.
  Association for Computing Machinery.

\bibitem[\protect\citeauthoryear{Bentkus}{Bentkus}{2005}]{bentkus2005lyapunov}
Bentkus, V. (2005, January).
\newblock A lyapunov-type bound in $r^d$.
\newblock {\em Theory of Probability and Its Applications\/}~{\em 49\/}(2),
  311--323.

\bibitem[\protect\citeauthoryear{Bierbrauer}{Bierbrauer}{2009}]{bierbrauer2009optimal}
Bierbrauer, F.~J. (2009).
\newblock Optimal income taxation and public good provision with endogenous
  interest groups.
\newblock {\em Journal of Public Economic Theory\/}~{\em 11\/}(2), 311--342.

\bibitem[\protect\citeauthoryear{Bierbrauer}{Bierbrauer}{2014}]{bierbrauer2014optimal}
Bierbrauer, F.~J. (2014, February).
\newblock Optimal tax and expenditure policy with aggregate uncertainty.
\newblock {\em American Economic Journal: Microeconomics\/}~{\em 6\/}(1),
  205--57.

\bibitem[\protect\citeauthoryear{Bierbrauer and Hellwig}{Bierbrauer and
  Hellwig}{2015}]{bierbrauer2015public}
Bierbrauer, F.~J. and M.~F. Hellwig (2015).
\newblock Public good provision in large economies.
\newblock Working paper, Max Planck Institute for Research on Collective Goods.

\bibitem[\protect\citeauthoryear{Bierbrauer and Hellwig}{Bierbrauer and
  Hellwig}{2016}]{bierbrauer2016robustly}
Bierbrauer, F.~J. and M.~F. Hellwig (2016, 03).
\newblock {Robustly Coalition-Proof Incentive Mechanisms for Public Good
  Provision are Voting Mechanisms and Vice Versa}.
\newblock {\em The Review of Economic Studies\/}~{\em 83\/}(4), 1440--1464.

\bibitem[\protect\citeauthoryear{B{\"o}rgers}{B{\"o}rgers}{2015}]{borgers2015introduction}
B{\"o}rgers, T. (2015).
\newblock {\em An Introduction to the Theory of Mechanism Design}.
\newblock Oxford University Press.

\bibitem[\protect\citeauthoryear{Brafman and Tennenholtz}{Brafman and
  Tennenholtz}{2000}]{brafman2000axiomatic}
Brafman, R.~I. and M.~Tennenholtz (2000, May).
\newblock An axiomatic treatment of three qualitative decision criteria.
\newblock {\em J. ACM\/}~{\em 47\/}(3), 452–482.

\bibitem[\protect\citeauthoryear{Brooks and Du}{Brooks and
  Du}{2021}]{brooks2021maxmin}
Brooks, B. and S.~Du (2021).
\newblock Maxmin auction design with known expected values.

\bibitem[\protect\citeauthoryear{Carrasco, {Farinha Luz}, Kos, Messner,
  Monteiro, and Moreira}{Carrasco et~al.}{2018}]{carrasco2018optimal}
Carrasco, V., V.~{Farinha Luz}, N.~Kos, M.~Messner, P.~Monteiro, and H.~Moreira
  (2018, September).
\newblock Optimal selling mechanisms under moment conditions.
\newblock {\em Journal of Economic Theory\/}~{\em 177}, 245 -- 279.

\bibitem[\protect\citeauthoryear{Clarke}{Clarke}{1971}]{clarke1971multipart}
Clarke, E.~H. (1971, September).
\newblock Multipart pricing of public goods.
\newblock {\em Public Choice\/}~{\em 11\/}(1), 17--33.

\bibitem[\protect\citeauthoryear{Crémer and McLean}{Crémer and
  McLean}{1988}]{cremer1988full}
Crémer, J. and R.~P. McLean (1988, November).
\newblock Full extraction of the surplus in bayesian and dominant strategy
  auctions.
\newblock {\em Econometrica\/}~{\em 56\/}(6), 1247--1257.

\bibitem[\protect\citeauthoryear{Dhangwatnotai, Roughgarden, and
  Yan}{Dhangwatnotai et~al.}{2015}]{dhangwatnotai2015revenue}
Dhangwatnotai, P., T.~Roughgarden, and Q.~Yan (2015, May).
\newblock Revenue maximization with a single sample.
\newblock {\em Games and Economic Behavior\/}~{\em 91}, 318 -- 333.

\bibitem[\protect\citeauthoryear{Du}{Du}{2018}]{du2018robust}
Du, S. (2018, September).
\newblock Robust mechanisms under common valuation.
\newblock {\em Econometrica\/}~{\em 86\/}(5), 1569--1588.

\bibitem[\protect\citeauthoryear{Feldman, Immorlica, Lucier, Roughgarden, and
  Syrgkanis}{Feldman et~al.}{2016}]{feldman2016price}
Feldman, M., N.~Immorlica, B.~Lucier, T.~Roughgarden, and V.~Syrgkanis (2016).
\newblock The price of anarchy in large games.
\newblock In {\em Proceedings of the Forty-Eighth Annual ACM Symposium on
  Theory of Computing}, STOC '16, New York, NY, USA, pp.\  963–976.
  Association for Computing Machinery.

\bibitem[\protect\citeauthoryear{Gershkov, Goeree, Kushnir, Moldovanu, and
  Shi}{Gershkov et~al.}{2013}]{gershkov2013equivalence}
Gershkov, A., J.~K. Goeree, A.~Kushnir, B.~Moldovanu, and X.~Shi (2013).
\newblock On the equivalence of bayesian and dominant strategy implementation.
\newblock {\em Econometrica\/}~{\em 81\/}(1), 197--220.

\bibitem[\protect\citeauthoryear{Goldberg, Hartline, Karlin, Saks, and
  Wright}{Goldberg et~al.}{2006}]{goldberg2006competitive}
Goldberg, A.~V., J.~D. Hartline, A.~R. Karlin, M.~Saks, and A.~Wright (2006,
  May).
\newblock Competitive auctions.
\newblock {\em Games and Economic Behavior\/}~{\em 55\/}(2), 242 -- 269.

\bibitem[\protect\citeauthoryear{Green and Laffont}{Green and
  Laffont}{1977}]{green1977revelation}
Green, J. and J.-J. Laffont (1977).
\newblock Révélation des préférences pour les biens publics:
  Caractérisation des mécanismes satisfaisants.
\newblock {\em Cahiers du Séminaire d'Économétrie\/}~(19), 83--103.

\bibitem[\protect\citeauthoryear{Groves}{Groves}{1973}]{groves1973incentives}
Groves, T. (1973, July).
\newblock Incentives in teams.
\newblock {\em Econometrica\/}~{\em 41\/}(4), 617--631.

\bibitem[\protect\citeauthoryear{Güth and Hellwig}{Güth and
  Hellwig}{1986}]{guth1986private}
Güth, W. and M.~Hellwig (1986, December).
\newblock The private supply of a public good.
\newblock {\em Journal of Economics\/}~{\em 5\/}(1), 121--159.

\bibitem[\protect\citeauthoryear{Hartline}{Hartline}{2016}]{hartline2016mechanism}
Hartline, J.~D. (2016).
\newblock Mechanism design and approximation.
\newblock Manuscript.

\bibitem[\protect\citeauthoryear{Hellwig}{Hellwig}{2003}]{hellwig2003public}
Hellwig, M.~F. (2003, July).
\newblock {Public-Good Provision with Many Participants}.
\newblock {\em The Review of Economic Studies\/}~{\em 70\/}(3), 589--614.

\bibitem[\protect\citeauthoryear{Kleinberg and Yuan}{Kleinberg and
  Yuan}{2013}]{kleinberg2013ratio}
Kleinberg, R. and Y.~Yuan (2013).
\newblock On the ratio of revenue to welfare in single-parameter mechanism
  design.
\newblock In {\em Proceedings of the Fourteenth ACM Conference on Electronic
  Commerce}, EC '13, New York, NY, USA, pp.\  589–602. Association for
  Computing Machinery.

\bibitem[\protect\citeauthoryear{Kosenok and Severinov}{Kosenok and
  Severinov}{2008}]{kosenok2008individually}
Kosenok, G. and S.~Severinov (2008, May).
\newblock Individually rational, budget-balanced mechanisms and allocation of
  surplus.
\newblock {\em Journal of Economic Theory\/}~{\em 140\/}(1), 126 -- 161.

\bibitem[\protect\citeauthoryear{Kunimoto and Zhang}{Kunimoto and
  Zhang}{2021}]{Kunimoto2021incentive}
Kunimoto, T. and C.~Zhang (2021).
\newblock On incentive compatible, individually rational public good provision
  mechanisms.
\newblock {\em Social Choice and Welfare\/}~{\em 57\/}(2), 431--468.

\bibitem[\protect\citeauthoryear{Kuzmics and Steg}{Kuzmics and
  Steg}{2017}]{kuzmics2017public}
Kuzmics, C. and J.-H. Steg (2017, July).
\newblock On public good provision mechanisms with dominant strategies and
  balanced budget.
\newblock {\em Journal of Economic Theory\/}~{\em 170}, 56 -- 69.

\bibitem[\protect\citeauthoryear{Mailath and Postlewaite}{Mailath and
  Postlewaite}{1990}]{mailath1990asymmetric}
Mailath, G.~J. and A.~Postlewaite (1990, July).
\newblock {Asymmetric Information Bargaining Problems with Many Agents}.
\newblock {\em The Review of Economic Studies\/}~{\em 57\/}(3), 351--367.

\bibitem[\protect\citeauthoryear{McLean and Postlewaite}{McLean and
  Postlewaite}{2018}]{mclean2018very}
McLean, R. and A.~Postlewaite (2018).
\newblock A very robust auction mechanism.
\newblock Pier working paper archive, Penn Institute for Economic Research,
  Department of Economics, University of Pennsylvania.

\bibitem[\protect\citeauthoryear{Myerson and Satterthwaite}{Myerson and
  Satterthwaite}{1983}]{myerson1983efficient}
Myerson, R.~B. and M.~A. Satterthwaite (1983, April).
\newblock Efficient mechanisms for bilateral trading.
\newblock {\em Journal of Economic Theory\/}~{\em 29\/}(2), 265 -- 281.

\bibitem[\protect\citeauthoryear{Neeman}{Neeman}{2003}]{neeman2003effectiveness}
Neeman, Z. (2003, May).
\newblock The effectiveness of english auctions.
\newblock {\em Games and Economic Behavior\/}~{\em 43\/}(2), 214 -- 238.

\bibitem[\protect\citeauthoryear{P\i{}nar and K\i{}z\i{}lkale}{P\i{}nar and
  K\i{}z\i{}lkale}{2017}]{pinar2017robust}
P\i{}nar, M.~c. and C.~K\i{}z\i{}lkale (2017, May).
\newblock Robust screening under ambiguity.
\newblock {\em Math. Program.\/}~{\em 163\/}(1–2), 273–299.

\bibitem[\protect\citeauthoryear{Roberts}{Roberts}{1976}]{roberts1976incentives}
Roberts, J. (1976, November).
\newblock The incentives for correct revelation of preferences and the number
  of consumers.
\newblock {\em Journal of Public Economics\/}~{\em 6\/}(4), 359 -- 374.

\bibitem[\protect\citeauthoryear{Roughgarden and Talgam-Cohen}{Roughgarden and
  Talgam-Cohen}{2019}]{roughgarden2019approximately}
Roughgarden, T. and I.~Talgam-Cohen (2019, August).
\newblock Approximately optimal mechanism design.
\newblock {\em Annual Review of Economics\/}~{\em 11\/}(1), 355--381.

\bibitem[\protect\citeauthoryear{Segal}{Segal}{2003}]{segal2003optimal}
Segal, I. (2003, June).
\newblock Optimal pricing mechanisms with unknown demand.
\newblock {\em American Economic Review\/}~{\em 93\/}(3), 509--529.

\bibitem[\protect\citeauthoryear{Stone}{Stone}{1980}]{stone1980optimal}
Stone, C.~J. (1980).
\newblock Optimal rates of convergence for nonparametric estimators.
\newblock {\em The Annals of Statistics\/}~{\em 8\/}(6), 1348--1360.

\bibitem[\protect\citeauthoryear{Stone}{Stone}{1983}]{STONE1983optimal}
Stone, C.~J. (1983).
\newblock Optimal uniform rate of convergence for nonparametric estimators of a
  density function or its derivatives11research was supported by nsf grant mcs
  80-02732.
\newblock In M.~H. Rizvi, J.~S. Rustagi, and D.~Siegmund (Eds.), {\em Recent
  Advances in Statistics}, pp.\  393--406. Academic Press.

\bibitem[\protect\citeauthoryear{Swinkels}{Swinkels}{1999}]{swinkels1999asymptotic}
Swinkels, J.~M. (1999, July).
\newblock {Asymptotic Efficiency for Discriminatory Private Value Auctions}.
\newblock {\em The Review of Economic Studies\/}~{\em 66\/}(3), 509--528.

\bibitem[\protect\citeauthoryear{Swinkels}{Swinkels}{2001}]{swinkels2001efficiency}
Swinkels, J.~M. (2001, January).
\newblock Efficiency of large private value auctions.
\newblock {\em Econometrica\/}~{\em 69\/}(1), 37--68.

\bibitem[\protect\citeauthoryear{Thorisson}{Thorisson}{1995}]{thorisson1995coupling}
Thorisson, H. (1995).
\newblock Coupling methods in probability theory.
\newblock {\em Scandinavian Journal of Statistics\/}~{\em 22\/}(2), 159--182.

\bibitem[\protect\citeauthoryear{Vickrey}{Vickrey}{1961}]{vickrey1961counterspeculation}
Vickrey, W. (1961, March).
\newblock Counterspeculation, auctions, and competitive sealed tenders.
\newblock {\em The Journal of Finance\/}~{\em 16\/}(1), 8--37.

\bibitem[\protect\citeauthoryear{Wilson}{Wilson}{1987}]{wilson1987game}
Wilson, R. (1987).
\newblock Game-theoretic analyses of trading processes.
\newblock In {\em Advances in Economic Theory, Fifth World Congress},
  Cambridge, pp.\  33--70. Cambridge University Press.

\bibitem[\protect\citeauthoryear{Zhang}{Zhang}{2021}]{zhang2021robust}
Zhang, W. (2021).
\newblock Robust bilateral trade mechanisms with known expectations.

\end{thebibliography}
\bibliographystyle{chicago}

\end{document}